\newcommand{\set}[1]{\left\{#1\right\}}
\newcommand{\pr}[1]{\left(#1\right)}
\newcommand{\fpr}[1]{\mathopen{}\left(#1\right)}
\newcommand{\abs}[1]{{\left|#1\right|}}
\newcommand{\enset}[2]{\left\{#1 ,\ldots , #2\right\}}
\newcommand{\enpr}[2]{\pr{#1 ,\ldots , #2}}
\newcommand{\enlst}[2]{{#1} ,\ldots , {#2}}
\newcommand{\np}{\textbf{NP}}
\newcommand{\define}{\leftarrow}
\newcommand{\reals}{{\mathbb{R}}}
\DeclareRobustCommand{\dispfunc}[2]{%
  \ensuremath{%
  \ifthenelse{\equal{#2}{}}%
    {\mathit{#1}}%
    {\mathit{#1}\fpr{#2}}}}
\newcommand{\wf}{{\ensuremath{w}}}
\newcommand{\wght}[1]{\dispfunc{\wf}{#1}}
\newcommand{\dnst}[1]{\dispfunc{d}{#1}}
\newcommand{\nbhd}[1]{\dispfunc{c}{#1}}
\newcommand{\score}[1]{\dispfunc{q}{#1}}
\newcommand{\fm}[1]{\mathcal{#1}}
\newcommand{\sortnodes}{\textsc{SortVertices}\xspace}
\newcommand{\choosetwo}[1]{{\ensuremath{{#1} \choose 2}}}
\newcommand{\degree}[1]{\dispfunc{\mathrm{deg}}{#1}}
\newcommand{\full}[1]{\ensuremath{{#1}_0}}
\newcommand{\fullwght}[1]{\dispfunc{\wf_0}{#1}}
\newcommand{\wghtmin}[1]{\dispfunc{\wf_m}{#1}}
\newcommand{\wghtsum}[1]{\dispfunc{\wf_s}{#1}}
\newcommand{\wghtnorm}[1]{\dispfunc{\wf_n}{#1}}
\definecolor{yafaxiscolor}{rgb}{0.3, 0.3, 0.3}
\definecolor{yafcolor1}{rgb}{0.4, 0.165, 0.553}
\definecolor{yafcolor2}{rgb}{0.949, 0.482, 0.216}
\definecolor{yafcolor3}{rgb}{0.47, 0.549, 0.306}
\definecolor{yafcolor4}{rgb}{0.925, 0.165, 0.224}
\definecolor{yafcolor5}{rgb}{0.141, 0.345, 0.643}
\definecolor{yafcolor6}{rgb}{0.965, 0.933, 0.267}
\definecolor{yafcolor7}{rgb}{0.627, 0.118, 0.165}
\definecolor{yafcolor8}{rgb}{0.878, 0.475, 0.686}
\newlength{\yafaxispad}
\newlength{\yaftlpad}
\newlength{\yaflabelpad}
\newlength{\yafaxiswidth}
\newlength{\yafticklen}
\def\pgfplots@drawtickgridlines@INSTALLCLIP@onorientedsurf#1{}
\newcommand{\yafdrawxaxis}[2]{
	\pgfplotstransformcoordinatex{#1}\let\xmincoord=\pgfmathresult 
	\pgfplotstransformcoordinatex{#2}\let\xmaxcoord=\pgfmathresult 
	\pgfsetlinewidth{\yafaxiswidth} 
	\pgfsetcolor{yafaxiscolor}
	\pgfpathmoveto{\pgfpointadd{\pgfpointadd{\pgfplotspointrelaxisxy{0}{0}}{\pgfqpointxy{\xmincoord}{0}}}{\pgfqpoint{-0.5\yafaxiswidth}{\yafaxispad}}}
	\pgfpathlineto{\pgfpointadd{\pgfpointadd{\pgfplotspointrelaxisxy{0}{0}}{\pgfqpointxy{\xmaxcoord}{0}}}{\pgfqpoint{0.5\yafaxiswidth}{\yafaxispad}}}
	\pgfusepath{stroke}

}
\newcommand{\yafdrawyaxis}[2]{
	\pgfplotstransformcoordinatey{#1}\let\ymincoord=\pgfmathresult 
	\pgfplotstransformcoordinatey{#2}\let\ymaxcoord=\pgfmathresult 
	\pgfsetlinewidth{\yafaxiswidth} 
	\pgfsetcolor{yafaxiscolor}
	\pgfpathmoveto{\pgfpointadd{\pgfpointadd{\pgfplotspointrelaxisxy{0}{0}}{\pgfqpointxy{0}{\ymincoord}}}{\pgfqpoint{\yafaxispad}{-0.5\yafaxiswidth}}}
	\pgfpathlineto{\pgfpointadd{\pgfpointadd{\pgfplotspointrelaxisxy{0}{0}}{\pgfqpointxy{0}{\ymaxcoord}}}{\pgfqpoint{\yafaxispad}{0.5\yafaxiswidth}}}
	\pgfusepath{stroke}
}
\newcommand{\yafdrawaxis}[4]{\yafdrawxaxis{#1}{#2}\yafdrawyaxis{#3}{#4}}
\pgfplotsset{axis y line=left, axis x line=bottom,
	tick align=outside,
	compat = 1.3,
	tickwidth=\yafticklen,
	clip = false,
	every axis title shift = 0pt,
    x axis line style= {-, line width = 0pt, opacity = 0},
    y axis line style= {-, line width = 0pt, opacity = 0},
    x tick style= {line width = \yafaxiswidth, color=yafaxiscolor, yshift = \yafaxispad},
    y tick style= {line width = \yafaxiswidth, color=yafaxiscolor, xshift = \yafaxispad},
    x tick label style = {font=\scriptsize, yshift = \yaftlpad},
    y tick label style = {font=\scriptsize, xshift = \yaftlpad},
    every axis y label/.style = {at = {(ticklabel cs:0.5)}, rotate=90, anchor=center, font=\scriptsize, yshift = -\yaflabelpad},
    every axis x label/.style = {at = {(ticklabel cs:0.5)}, anchor=center, font=\scriptsize, yshift = \yaflabelpad},
    x tick label style = {font=\scriptsize, yshift = 1pt},
    grid = major,
    major grid style  = {dash pattern = on 1pt off 3 pt},
	every axis plot post/.append style= {line width=\yafaxiswidth} ,
	legend cell align = left,
	legend style = {inner sep = 1pt, cells = {font=\scriptsize}},
	legend image code/.code={%
		\draw[mark repeat=2,mark phase=2,#1] 
		plot coordinates { (0cm,0cm) (0.15cm,0cm) (0.3cm,0cm) };%
	} 
}
\newlength{\adjlength}
\newcommand{\adjbox}[2]{
\pgfmathtruncatemacro{\adjside}{#1 - 1}
\fill[fill = #2, draw = #2, line width = 1pt] (0, #1\adjlength) -- (0, \adjlength)
\foreach \x in {1,..., \adjside} {-- ++(\adjlength, 0) -- ++(0, \adjlength) } -- cycle;
\fill[fill = #2, draw = #2, line width = 1pt] (#1\adjlength, 0) -- (\adjlength, 0)
\foreach \x in {1,..., \adjside} {-- ++(0, \adjlength) -- ++(\adjlength, 0) } -- cycle;
}
\newcommand{\adjblock}[4]{
\pgfmathtruncatemacro{\adjside}{#1 - 1}
\pgfmathtruncatemacro{\adjend}{#1 + #2}
\fill[fill = #4, line width = 1pt] (0, #1\adjlength) -- (\adjside\adjlength, #1\adjlength)
\foreach \x in {1,..., #2} {-- ++(\adjlength, 0) -- ++(0, \adjlength) } -- ++(0, #3\adjlength)  -- ++(-#2\adjlength, 0) -- ++(0, -#3\adjlength) -- (0, \adjend\adjlength) -- cycle;
\fill[fill = #4, line width = 1pt] (#1\adjlength, 0) -- (#1\adjlength, \adjside\adjlength)
\foreach \x in {1,..., #2} {-- ++(0, \adjlength) -- ++(\adjlength, 0) } -- ++(#3\adjlength, 0)  -- ++(0, -#2\adjlength) -- ++(-#3\adjlength, 0) -- (\adjend\adjlength, 0) -- cycle;
}
\newcommand{\adjshort}[4]{
\pgfmathtruncatemacro{\adjside}{#1 - 1}
\pgfmathtruncatemacro{\adjend}{#1 + #2 + #3}
\fill[fill = #4, line width = 1pt] (0, #1\adjlength) -- (\adjside\adjlength, #1\adjlength)
\ifnum#2>0
\foreach \x in {1,..., #2} {-- ++(\adjlength, 0) -- ++(0, \adjlength) }
\fi
-- ++(0, #3\adjlength)  -- (0, \adjend\adjlength) -- cycle;
\fill[fill = #4, line width = 1pt] (#1\adjlength, 0) -- (#1\adjlength, \adjside\adjlength)
\ifnum#2>0
\foreach \x in {1,..., #2} {-- ++(0, \adjlength) -- ++(\adjlength, 0) }
\fi
-- ++(#3\adjlength, 0)  -- (\adjend\adjlength, 0) -- cycle;
}
\newcommand{\adjgrid}[3]{
\pgfmathtruncatemacro{\adjside}{#1 - 1}
\pgfmathtruncatemacro{\adjend}{#2 + 1}
\foreach \x in {#1,..., #2} {
\draw[draw = #3, line width = 0.5pt] (0, \x\adjlength) -- (\adjend\adjlength, \x\adjlength);
}
\foreach \x in {1,..., #1} {
\draw[draw = #3, line width = 0.5pt] (#1\adjlength, \x\adjlength) -- (\adjend\adjlength, \x\adjlength);
}
\foreach \x in {#1,..., #2} {
\draw[draw = #3, line width = 0.5pt] (\x\adjlength, 0) -- (\x\adjlength, \adjend\adjlength);
}
\foreach \x in {1,..., #1} {
\draw[draw = #3, line width = 0.5pt] (\x\adjlength, #1\adjlength) -- (\x\adjlength, \adjend\adjlength);
}
\draw[draw = #3, line width = 1pt] (0, #1\adjlength) -- (\adjside\adjlength, #1\adjlength)
\foreach \x in {#1,..., #2} {-- ++(\adjlength, 0) -- ++(0, \adjlength) } -- (0, \adjend\adjlength) -- cycle;
\draw[draw = #3, line width = 1pt] (#1\adjlength, 0) -- (#1\adjlength, \adjside\adjlength)
\foreach \x in {#1,..., #2} {-- ++(0, \adjlength) -- ++(\adjlength, 0) } -- (\adjend\adjlength, 0) -- cycle;
}
\tikzset{circle split part fill/.style  args={#1,#2}{%
 alias=tmp@name, 
  postaction={%
    insert path={
     \pgfextra{%
     \pgfpointdiff{\pgfpointanchor{\pgf@node@name}{center}}%
                  {\pgfpointanchor{\pgf@node@name}{east}}%
     \pgfmathsetmacro\insiderad{\pgf@x}
      \fill[#1] (\pgf@node@name.base) ([xshift=-\pgflinewidth]\pgf@node@name.east) arc
                          (0:360:\insiderad-\pgflinewidth)--cycle;
      \fill[#2] (\pgf@node@name.base) ([xshift=\pgflinewidth]\pgf@node@name.west)  arc
                           (180:360:\insiderad-\pgflinewidth)--cycle;            
         }}}}}  
\newcommand{\fullproof}{1}
\begin{document}

\title{Discovering Nested Communities}

\author{Nikolaj Tatti \and Aristides Gionis}
\institute{Helsinki Institute for Information Technology \\  
Department of Information and Computer Science\\
Aalto University\\
\url{{nikolaj.tatti,aristides.gionis}@aalto.fi}}

\maketitle

\begin{abstract}
Finding communities in graphs is one of the most well-studied problems in data mining and social-network analysis.
In many real applications, the underlying graph does not have a clear community structure. 
In those cases, selecting a single community turns out to be a fairly ill-posed problem, 
as the optimization criterion has to make a difficult choice between selecting a tight but small community or a more inclusive but sparser community.

In order to avoid the problem of selecting only a single community we propose
discovering a sequence of nested communities. 
More formally, given a graph and a starting set, our goal is to discover a sequence of communities all containing the starting set, and each community forming a denser subgraph than the next.
Discovering an optimal sequence of communities is a complex optimization problem, 
and hence we divide it into two subproblems: 1) discover the optimal
sequence for a fixed order of graph vertices, a subproblem that we can solve
efficiently, and 2) find a good order.
We employ a simple heuristic for discovering an order and we provide empirical and theoretical evidence that our order is good.
\keywords{community discovery, monotonic segmentation, graph mining, nested communities}
\end{abstract}

\section{Introduction}
\label{sec:intro}

Discovering communities, tightly connected subgraphs, is one of the most
well-studied problems in the field of graph mining.
Given some optimization criterion, discovering a community is a computationally
challending task, typically \np-hard. 
Additionally, as pointed out by Leskovec et al.~\cite{DBLP:conf/www/LeskovecLDM08}, in many real applications 
the underlying graph does not have a clear community structure.
Such cases make the community-finding problem inherently ill-posed, 
as the optimization criterion has to make a difficult, and eventually
arbitrary, choice between selecting a tight but small community
or a more inclusive but more sparse community. 
Moreover, the existence of a universal criterion for making such a
choice is unlikely as the balance between the size and the density of
the desired community will depend on the underlying application.

In order to avoid the problem of selecting only a single community,
we propose a problem of discovering a \emph{sequence of nested
communities}.  
More formally, given a graph $G$ and a set of source vertices $S$, our goal is
to discover a sequence of $k$ communities around $S$, such that each community
is a subset of the next one. 
The first community will consist only of $S$ while the last community
will contain the whole graph. 
Inner communities should be tighter than the outer communities. 
We express this requirement by computing the density of each community
and require that the next community should have a lower density than
the current community.
In addition, we require that each community should be as uniform as possible.
We measure uniformity by computing the variance of weights of the
edges and requiring it to be small.

Discovering a sequence of communities by optimizing the uniformity criterion is a
challenging problem. We will show that several optimization problems related to
the optimal solution are \np-hard. 
Hence, we split the problem into two subproblems. 
We can view a community sequence as a bucket order on the
vertices, each bucket consisting of vertices contained in the community and not
contained in the previous community.  
Our first subproblem is to discover a total order on the vertices
respecting the optimal bucket order.  
The second subproblem is to discover the optimal sequence of
communities, given an order on the graph vertices.  
Fortunately, this subproblem 
can be formulated as a standard sequence-segmentation problem, and
thus, it can be solved in polynomial time.
In particular, we can solve this problem optimally in quadratic time
or we can find an approximate solution in nearly-linear time.  
Discovering the order is more difficult as this is a complex
combinatorial problem.  
We propose a simple ordering technique used for discovering dense
subgraphs: pick iteratively a vertex with the lowest degree, and
remove it from the graph. 
We provide theoretical evidence implying that this is a good order and we also
show experimentally that this order outperforms several baselines.

The rest of the paper is organized as follows. We introduce preliminary
notation in Section~\ref{sec:prel} and formalize our optimization problem in
Section~\ref{sec:nested}. In section~\ref{sec:chain} we develop our discovery
algorithm and point out theoretical properties of our approach.
Section~\ref{sec:related} is devoted to related work and Section~\ref{sec:exps}
is devoted to experimental evaluation. We conclude our paper with a
short conclusion in Section~\ref{sec:concl}.

\section{Preliminaries}
\label{sec:prel}

We consider a weighted undirected graph $G = (V, E,\wght{})$ over a
set of vertices $V$ and edges $E \subseteq {\choosetwo{V}}$. 
We use the notation $\choosetwo{V}$ to denote the set of
unordered pairs of distinct vertices from $V$.
The function $\wght{} : E\rightarrow\reals$
assigns a weight $\wght{e}$ to each edge $e \in E$.
Also, given a subset of vertices $V'\subseteq V$ we denote by $E(V')$
the set of edges in the {\em induced} subgraph of $G$ defined by~$V'$.

The definitions and algorithms in this paper rely on a notion of 
{\em  edge density}, 
which is defined not only over subsets of vertices, 
but also over arbitrary {\em pairs} of subsets of vertices. 
Even though it is conceptually simple,
our edge-density definition requires slightly complex notation for
determining the set of potential edges to be used as a denominator in
the density ratio. 
To simplify our presentation we use the notation described below. 

Given the graph $G = (V, E,\wght{})$, we consider its 
{\em completed} representation  $\full{G}=(V,\full{E},\fullwght{})$, 
where $\full{E}=\choosetwo{V}$, 
and where $\fullwght{}$ is an extension of $\wght{}$, so that 
$\fullwght{e}=\wght{e}$ if $e\in E$, 
and $\fullwght{e}=0$ if $e\not\in E$.
In other words, $\full{G}$ can be seen as a complete graph, where all
non-edges of $G$ become zero-weight edges in $\full{G}$.
We note again that we use the completed graph representation only to
simplify our notation; in our implementation there is no need to store
the zero-weight edges. 

Now consider the completed representation
$\full{G}=(V,\full{E},\fullwght{})$ of a graph $G$, and let
$F\subseteq\full{E}$ be a non-empty subset of edges. 
We define the {\em weight} and {\em density} of $F$ as
\[
\wght{F} = \sum_{e \in F} \wght{e} 
\,\text{ and }\, 
\dnst{F} = \frac{\wght{F}}{\abs{F}}.
\]

Consider now two subsets of vertices $S, T \subseteq V$.
We define the set of {\em cross edges} from $S$ to $T$ as 
$\nbhd{S, T} = \set{(x, y) \in E \mid x \in S, y \in T}$.
It is important to note that we do not impose any constraint on the
sets $S$ and $T$; they may overlap in an arbitrary way. 
For instance, if the sets $S$ and $T$ are disjoint the edges in
$\nbhd{S,T}$ are the {\em cut} edges from $S$ to $T$, while if
$S\subseteq T$ the edge set $\nbhd{S,T}$ contains, among others, 
all the edges within~$S$.

Finally, we write $\wght{S, T}$ as a shorthand of $\wght{\nbhd{S, T}}$
and we write $\dnst{S, T}$ as a shorthand of $\dnst{\nbhd{S, T}}$.

\section{Nested Communities}
\label{sec:nested}

As we discussed in the introduction, our goal is to find the optimal
sequence of nested communities, with respect to a set of source
vertices of the input graph. 
We denote this set of source vertices by~$S$.
For conceptual simplicity, one may think of $S$ as a singleton set,
that is, identifying the sequence of nested communities for a single
vertex. 
However, all our problem definitions, algorithms, and proofs, hold for the
general case of $S$ being any subset of~$V$.

Our objective is to find $k$ nested communities, where the
parameter $k$ is part of the problem input.
Given a set of source vertices $S$, we represent a sequence of nested
communities with respect to $S$, by the sequence of vertex sets
$S= V_0 \subseteq V_1 \subseteq \cdots \subseteq V_k = V$.

Intuitively, the inner sets of the nested-community sequence are
expected to be more strongly related to the source set~$S$.
This type of relatedness is expressed by the notion of density. 
So, $V_1$ is the densest community that contains $S$, 
$V_2$ is the second densest community, and in general, we require that
the density of $V_i$ should decrease as $i$ increases.

Considering the requirement of monotonically decreasing density in
isolation is not sufficient to determine in a well-defined manner a
desirable sequence of nested communities.
Indeed, given a graph $G$, a set of source vertices $S$, and integer
$k$, there is a potentially exponential number of ways to partition
the set of vertices of the graph into a sequence of nested communities
$V_0,\ldots,V_k$.

The main question we are facing is to decide where exactly to draw the
boundary between each pair of communities $V_i$ and $V_{i+1}$. 
To answer this question, we follow an approach inspired by
{\em segmentation problems}.  
In particular, our approach is as follows:  
consider the set of vertices $D_{i+1}=V_{i+1}\setminus V_i$ that need
to be added to the community $V_i$ in order to form
community~$V_{i+1}$.
Consider also the set of edges  $E_{i+1} = E(V_{i + 1}) \setminus E(V_i)$, defined as the
additional edges brought in by extending the community  $V_i$ to the
community $V_{i+1}$.  
We can then define the density of the set of edges $E_{i+1}$.
To capture the intuition that the set $D_{i+1}$ should form a coherent
extension to $V_i$ we require that the density of $E_{i+1}$ is as
{\em uniform} as possible.

The notion of uniformity for a set of edges, among many ways, can be
expressed as a sum of square of difference of the weight of each edge
from the average weight of the set.
We thus have the following definition.

\begin{definition}
Given a set of edges $F \subseteq E$, we define the
{\em density-uniformity score} as
\[
	\score{F} = \sum_{e \in F} \pr{\wght{e} - \dnst{F}}^2.
\]
\end{definition}

Our goal is then to find a sequence of nested communities so that the
successive segments of added edges are as uniform as possible with
respect to their density.
Formulating this objective as an optimization problem not only gives
meaningful semantics  to the nested community detection problem, but
it also makes the problem well-defined. 
Motivated by the discussion above, our main problem definition is given below.

\begin{problem}
\label{prb:chain}
Given a weighted input graph $G = (V, E, \wght{})$, a set of
source vertices $S \subset V$, and an integer $k$, 
find the sequence of nested communities  
$\fm{V} = \{ S = V_0 \subseteq V_1 \subseteq \cdots \subseteq V_k = V\}$
that minimizes the density-uniformity score
\[
\score{\fm{V}} = \sum_{i = 1}^k \score{E(V_i) \setminus E(V_{i - 1})},
\]
subject to the constraint  $\dnst{V_i} < \dnst{V_{i - 1}}$ for $i = 2, \ldots, k$.
\end{problem}

\section{An Algorithm for Discovering Nested Communities}\label{sec:chain}

In this section we present our algorithm for discovering nested communities.
We begin by demonstrating a necessary condition for the optimal solution based
on dense subgraphs. Discovering such subgraphs turns out to be computationally
intractable. We then split the original problem into two subproblems:
discovering community sequence for a fixed order of vertices, a problem which
we can solve efficiently, and discovering such an order. We provide a simple heuristic
for discovering an order, and provide theoretical evidence
that this order is good.

\subsection{Nested Communities and Dense Subgraphs}
\label{section:dense-and-sparse}

We start our discussion by demonstrating a connection of the problem
of finding the optimal sequence of nested communities, i.e., solving
Problem~\ref{prb:chain}, with problems related to finding dense
subgraphs of a given graph. 

To establish this connection, consider a triple of communities 
$V_{i-1} \subseteq V_i \subseteq V_{i+1}$ in an {\em optimal solution}
to Problem~\ref{prb:chain}.
Consider the two corresponding segments
$D_{i+1} = V_{i+1} \setminus V_i$ and
$D_i = V_i \setminus V_{i-1}$.
Consider also any two subsets of those segments,
$X \subseteq D_{i+1}$ and  $Y \subseteq D_i$, 
that is, $X$ is a subset of the outer segment, 
while $Y$ is a subset of the inner segment, see Figure~\ref{fig:prop1:a} for a visualization.
As we will show shortly, adding the outer subset $X$ in the community 
$V_i$ leads to a situation where the density of the subset $X$ with
respect to the overall community $V_i$ is no better than the density
of the subset $Y$ with respect to the community $V_i$.
Otherwise, either adding $X$ to $V_i$ (see Figure~\ref{fig:prop1:b}) or removing $Y$ from $V_i$
(see Figure~\ref{fig:prop1:c}) lead to a better solution.
This follows from the fact that we require that the
densities of the nested communities in any feasible solution of
Problem~\ref{prb:chain} decrease monotonically.

\begin{figure}[t]
\begin{center}

\subfigure[\label{fig:prop1:a}Original community]{
\begin{minipage}[b]{3.5cm}
\begin{center}
\begin{tikzpicture}[very thick, scale = 0.8]
\draw[yafcolor5, fill = yafcolor5!10] (0, 0) circle (1.7);
\draw[yafcolor5, fill = yafcolor5!20] (-45:1) .. controls +(45:0.3) and +(-135:0.3) .. (1.4, -0.28) arc (-45:45:0.4) .. controls +(135:0.3) and +(-45:0.3) .. (45:1);
\draw[yafcolor3, fill = yafcolor3!30] (0, 0) circle (1);
\draw[yafcolor2, fill = yafcolor2!30] (-0.1, 0.3) circle (0.3);
\draw[yafcolor3, fill = yafcolor3!20] (-30:1) .. controls +(-135:0.3) and +(-45:0.3) .. (0.28, -0.4) arc (45:135:0.4) .. controls +(-135:0.3) and +(-45:0.3) .. (-160:1) arc (-160:-30:1);
\node[] at (-0.1, 0.3) {$S$};
\node[] at (0, -0.6) {$Y$};
\node[] at (1.2, 0) {$X$};
\node[] at (0.6, 0) {$D_1$};
\node[] at (-0.95, 0.95) {$D_2$};
\end{tikzpicture}
\end{center}
\end{minipage}}\hfill
\subfigure[\label{fig:prop1:b}Adding $X$]{
\begin{minipage}[b]{3cm}
\begin{tikzpicture}[very thick, scale = 0.8]
\draw[yafcolor5, fill = yafcolor5!10] (0, 0) circle (1.7);
\draw[yafcolor3, fill = yafcolor3!30] (-45:1) .. controls +(45:0.3) and +(-135:0.3) .. (1.4, -0.28) arc (-45:45:0.4) .. controls +(135:0.3) and +(-45:0.3) .. (45:1)
arc (45:315:1) -- cycle;
\draw[yafcolor2, fill = yafcolor2!30] (-0.1, 0.3) circle (0.3);
\node[] at (-0.1, 0.3) {$S$};
\node[] at (0.6, 0) {$D_1$};
\node[] at (-0.95, 0.95) {$D_2$};
\end{tikzpicture}
\end{minipage}}\hfill
\subfigure[\label{fig:prop1:c}Removing $Y$]{
\begin{minipage}[b]{3cm}
\begin{tikzpicture}[very thick, scale = 0.8]
\draw[yafcolor5, fill = yafcolor5!10] (0, 0) circle (1.7);
\draw[yafcolor3, fill = yafcolor3!20] (-30:1) .. controls +(-120:0.3) and +(-45:0.3) .. (0.28, -0.4) arc (45:135:0.4) .. controls +(-135:0.3) and +(-60:0.3) .. (-160:1) arc (200:-30:1) -- cycle;
\draw[yafcolor2, fill = yafcolor2!30] (-0.1, 0.3) circle (0.3);
\node[] at (-0.1, 0.3) {$S$};
\node[] at (0.6, 0) {$D_1$};
\node[] at (-0.95, 0.95) {$D_2$};
\end{tikzpicture}
\end{minipage}}
\end{center}
\caption{Communities related to Proposition~\ref{proposition:density}. If $\dnst{X, X \cup D_1} > \dnst{Y, D_1}$, then either adding $X$ to $D_1$ or removing $Y$ from $D_1$ will yield a better score.}
\end{figure}
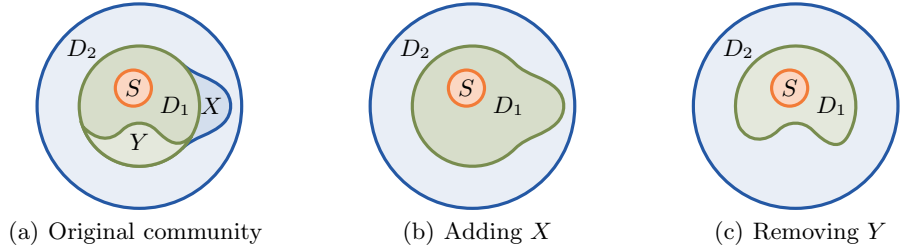

Before proceeding to discussing the implications of this observation,
we first give a formal statement and its proof.

\begin{proposition}
\label{proposition:density}
Consider a graph $G = (V, E, \wght{})$, a set of
source vertices $S \subseteq V$, and an integer $k$. 
Let $\fm{V} = \pr{ S = V_0 \subseteq V_1 \subseteq \cdots \subseteq
  V_k = V}$ 
be the optimal sequence of nested communities, that is, 
a solution to Problem~\ref{prb:chain}.
Fix $i$ such that $1 \leq i \leq k - 1$ and let $X \subseteq V_{i + 1} \setminus V_i$
and $Y  \subseteq V_i \setminus V_{i - 1}$. 
Then 
\[
\dnst{X, X \cup V_i} \leq \dnst{Y, V_i}.
\]
\end{proposition}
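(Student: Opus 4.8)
The plan is to argue by contradiction, exploiting the fact that moving the boundary community $V_i$ only \emph{repartitions} a fixed set of edges. Write $E_i = E(V_i)\setminus E(V_{i-1})$ and $E_{i+1} = E(V_{i+1})\setminus E(V_i)$ for the two segments that can be affected, and set $A = \nbhd{X, X \cup V_i}$ and $B = \nbhd{Y, V_i}$, so that $\dnst{A} = \dnst{X, X\cup V_i}$ and $\dnst{B} = \dnst{Y, V_i}$. The structural observation driving everything is that $A = E(V_i \cup X)\setminus E(V_i)\subseteq E_{i+1}$ and $B = E(V_i)\setminus E(V_i\setminus Y)\subseteq E_i$. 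Hence replacing $V_i$ by $V_i\cup X$ moves the block $A$ from $E_{i+1}$ into $E_i$, while replacing $V_i$ by $V_i\setminus Y$ moves $B$ from $E_i$ into $E_{i+1}$. Because $X\subseteq V_{i+1}\setminus V_i$ and $Y\subseteq V_i\setminus V_{i-1}$, both modified sequences remain nested and leave $V_0=S$ and $V_k=V$ untouched; and since $1\le i\le k-1$, only segments $i$ and $i+1$ change, so both are legitimate candidate solutions whose objective differs from $\fm{V}$ only in these two terms.

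Next I would assume $\dnst{A} > \dnst{B}$ and compute the change in $\score{E_i}+\score{E_{i+1}}$ caused by each move. The tool is the sum-of-squares (ANOVA) identity
\[
\score{F_1 \sqcup F_2} = \score{F_1}+\score{F_2}+\frac{\abs{F_1}\abs{F_2}}{\abs{F_1}+\abs{F_2}}\pr{\dnst{F_1}-\dnst{F_2}}^2,
\]
valid for disjoint edge sets $F_1,F_2$. Applying it to $E_i \sqcup A$ and to the split $E_{i+1} = \pr{E_{i+1}\setminus A}\sqcup A$ makes the internal term $\score{A}$ cancel, so the score change of adding $X$ collapses to a difference of two nonnegative merge-cost terms; removing $Y$ is fully symmetric. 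The aim of this step is to show that $\dnst{A}>\dnst{B}$ forces at least one of the two differences to be strictly negative, i.e.\ at least one of the two moves strictly lowers the objective.

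The delicate point, which I expect to be the main obstacle, is \emph{feasibility}: each move alters $\dnst{V_i}$ and may threaten the strict constraints $\dnst{V_{i+1}}<\dnst{V_i}<\dnst{V_{i-1}}$ (only the inequalities at indices $i$ and $i+1$ can be disturbed). Since $\dnst{V_i}$ is a weighted average of $\dnst{V_i}$ and $\dnst{A}$ after adding $X$, and of $\dnst{V_i\setminus Y}$ and $\dnst{B}$ after removing $Y$, the two moves push the community density in opposite directions: adding a block whose density is below $\dnst{V_i}$ lowers it, whereas removing such a block raises it. I would therefore split on the position of $\dnst{V_i}$ relative to $\dnst{A}$ and $\dnst{B}$, noting that $\dnst{A}>\dnst{B}$ guarantees at least one of $\dnst{A}>\dnst{V_i}$ or $\dnst{B}<\dnst{V_i}$ must hold (otherwise $\dnst{A}\le\dnst{V_i}\le\dnst{B}$). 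In each case one identifies a move that is simultaneously score-decreasing, from the computation above, and density-feasible.

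The remaining subtlety is that a \emph{full} move might overshoot the neighbouring bound even when it strictly improves the score. Here I would use the freedom in the construction: rather than transferring all of $X$ (or $Y$), restrict the transfer to a suitable sub-block, so that the density constraint is respected while the score still strictly decreases. Any such strictly improving feasible modification contradicts the optimality of $\fm{V}$, and therefore the assumption $\dnst{A}>\dnst{B}$ is untenable, giving $\dnst{X,X\cup V_i}\le\dnst{Y,V_i}$. The genuinely hard part is not the algebra of the score differences, which is routine once the ANOVA identity is in place, but reconciling the improving direction with the strict monotonicity constraint, since both candidate moves perturb $\dnst{V_i}$; the case analysis on $\dnst{V_i}$ together with the opposite-direction behaviour of the two moves is what makes a feasible improving modification always available.
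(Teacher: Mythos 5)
Your overall strategy coincides with the paper's: view the two perturbations (absorb $X$ into $V_i$, expel $Y$ from $V_i$) as competing solutions that merely re-assign the edge blocks $A=\nbhd{X,X\cup V_i}$ and $B=\nbhd{Y,V_i}$ between the segments $E_i$ and $E_{i+1}$, apply a sum-of-squares decomposition, and conclude from optimality of $\fm{V}$. The execution differs at the step you declare routine. The paper does not compute exact score changes via merge costs; it invokes Lemma~\ref{lem:centroid} with an \emph{arbitrary} reference point $d$ to bound the score of each perturbed segment by the squared deviations about the \emph{old} centroids $\lambda_1=\dnst{E_{i+1}}$ and $\lambda_2=\dnst{E_i}$. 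Optimality then yields $\abs{\dnst{B}-\lambda_2}\leq\abs{\dnst{B}-\lambda_1}$ and $\abs{\dnst{A}-\lambda_1}\leq\abs{\dnst{A}-\lambda_2}$, hence $\dnst{A}\leq(\lambda_1+\lambda_2)/2\leq\dnst{B}$ since $\lambda_1\leq\lambda_2$. Your exact version is true but does not ``collapse'' as stated: the two merge costs carry unequal coefficients $\abs{F_1}\abs{F_2}/(\abs{F_1}+\abs{F_2})$ and involve $\dnst{E_{i+1}\setminus A}$ and $\dnst{E_i\setminus B}$ rather than $\lambda_1,\lambda_2$, so to show that $\dnst{A}>\dnst{B}$ forces one difference negative you would still have to rewrite $\dnst{E_{i+1}\setminus A}-\dnst{A}$ in terms of $\lambda_1-\dnst{A}$, at which point you have rederived the paper's reference-point bound.

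The genuine gap is in your feasibility repair. You are right that both moves perturb densities and could threaten the strict monotonicity constraints; the paper silently treats both perturbed sequences as admissible competitors and does not address this at all. But your proposed fix does not cohere: you split cases on the position of $\dnst{V_i}$ relative to $\dnst{A}$ and $\dnst{B}$, whereas which move is score-improving is governed by the position of $\dnst{A}$ and $\dnst{B}$ relative to $(\lambda_1+\lambda_2)/2$ --- a different quantity --- so nothing guarantees that the improving move and the feasible move coincide. The fallback of transferring a ``suitable sub-block'' of $X$ or $Y$ is asserted, not constructed; shrinking $X$ changes both the transferred edge set and its density in an uncontrolled way, and no argument is given that a sub-block exists that simultaneously preserves all the neighbouring density constraints and strictly lowers the score. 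Either follow the paper and accept the moves as unconditionally admissible, or this step needs an actual construction; as written it is the one place your plan could founder.
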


For the proof of the proposition we require the
following lemma, which states that the mean square error
of a set of numbers from a single point, increases with the distance of that
point from the mean of the numbers. 
The lemma can be derived by simple algebraic manipulations, and its proof is omitted. 

\begin{lemma}
\label{lem:centroid}
Let $\enlst{w_1}{w_N}$ and $\enlst{x_1}{x_N}$ be two sets of real numbers.
Let $W = \sum_{i = 1}^N w_i$ and $\mu = \frac{1}{W} \sum_{i = 1}^N w_ix_i$. 
For any real number $d$ it is 
\[
	\sum_{i = 1}^N w_i(x_i - d)^2 = \sum_{i = 1}^N w_i(x_i - \mu)^2 + W(d - \mu)^2.
\]
\end{lemma}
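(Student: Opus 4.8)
The plan is to prove this identity by the standard \emph{add-and-subtract-the-mean} technique, which turns the desired equation into an algebraic expansion whose cross term vanishes. Concretely, I would start from the left-hand side and rewrite each deviation from $d$ by inserting the weighted mean $\mu$:
\[
x_i - d = (x_i - \mu) + (\mu - d).
\]
Squaring this decomposition and multiplying by $w_i$ gives, for each index $i$, the three terms $w_i(x_i-\mu)^2$, $2w_i(x_i-\mu)(\mu-d)$, and $w_i(\mu-d)^2$. Summing over $i=1,\ldots,N$ then splits the left-hand side into three sums.

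The first sum is exactly $\sum_{i=1}^N w_i(x_i-\mu)^2$, the first term on the right-hand side, so no work is needed there. The third sum factors as $(\mu-d)^2\sum_{i=1}^N w_i = W(\mu-d)^2 = W(d-\mu)^2$, which is the second term on the right-hand side. The only thing that must be checked is that the middle (cross) term disappears, and this is where the definition of $\mu$ does all the work: factoring out the constant $2(\mu-d)$ leaves $\sum_{i=1}^N w_i(x_i-\mu)$, and since $\mu = \tfrac{1}{W}\sum_i w_i x_i$ we have $\sum_i w_i x_i = W\mu$, whence $\sum_i w_i(x_i-\mu) = W\mu - \mu\sum_i w_i = W\mu - W\mu = 0$.

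There is no genuine obstacle here; the identity is a weighted version of the parallel-axis / bias--variance decomposition, and the entire content is the vanishing of the cross term under the weighted-mean normalization. The one point worth stating explicitly, to keep the argument self-contained, is the intermediate fact $\sum_{i=1}^N w_i(x_i-\mu)=0$, since that is the single place where the hypothesis on $\mu$ is used. Everything else is a two-line expansion, so I would present the computation directly rather than as a chain of lemmas, noting only that no sign or positivity assumptions on the $w_i$ are required for the identity to hold.
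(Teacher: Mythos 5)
Your proof is correct, and it is precisely the ``simple algebraic manipulations'' the paper alludes to when it omits the proof of Lemma~\ref{lem:centroid}: expand $x_i-d=(x_i-\mu)+(\mu-d)$, observe the cross term vanishes because $\sum_i w_i(x_i-\mu)=W\mu-W\mu=0$, and collect the remaining terms. Nothing further is needed (beyond the implicit requirement $W\neq 0$ so that $\mu$ is defined, which is part of the lemma's hypotheses rather than the proof).
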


We are now ready to prove the proposition.

\begin{proof}[Proposition \ref{proposition:density}]
Let $C_1 = E(V_{i + 1}) \setminus E(V_i)$ and $C_2 = E(V_i) \setminus E(V_{i - 1})$.
Let us break $C_1$ into two parts,
$D_{11} = \nbhd{X, X \cup V_i}$ and $D_{12} = C_1 \setminus D_{11}$.
Similarly, let us break $C_2$ into two parts,
$D_{21} = \nbhd{Y, V_i}$ and $D_{22} = C_2 \setminus D_{21}$.
Define the centroids $\mu_{ij} = \dnst{D_{ij}}$ and $\lambda_i = \dnst{C_i}$.
Lemma~\ref{lem:centroid} now implies that
\begin{eqnarray*}
	s = \score{C_1} + \score{C_2} & = & \text{const} + \abs{D_{11}}(\mu_{11} - \lambda_1)^2 + \abs{D_{21}}(\mu_{21} - \lambda_2)^2,\\
	s_1 = \score{C_1 \cup D_{21}} + \score{D_{22}} & = & \text{const} + \abs{D_{11}}(\mu_{11} - \lambda_1)^2 + \abs{D_{21}}(\mu_{21} - \lambda_1)^2,\\
	s_2 = \score{D_{12}} + \score{C_1 \cup D_{11}} & = & \text{const} + \abs{D_{11}}(\mu_{11} - \lambda_2)^2 + \abs{D_{21}}(\mu_{21} - \lambda_2)^2,\\
\end{eqnarray*}
where const is equal to
\[
	\sum_{i = 1}^2 \score{D_{i1}} + \score{D_{i2}} + \abs{D_{i2}}(\mu_{i2} - \lambda_i)^2\quad.
\]
Since $\fm{V}$ is optimal we must have $s \leq s_1$ and $s \leq s_2$. Otherwise, we can obtain
a better segmentation by attaching $X$ to $V_i$ or deleting $Y$ from $V_i$.
This implies that $\abs{\mu_{21} - \lambda_2} \leq \abs{\mu_{21} - \lambda_1}$ and
$\abs{\mu_{11} - \lambda_1} \leq \abs{\mu_{11} - \lambda_2}$. 
Since $\lambda_2 \geq \lambda_1$, this implies that $\mu_{21} \geq (\lambda_1 + \lambda_2)/2$
and $\mu_{11} \leq (\lambda_1 + \lambda_2)/2$, which implies  $\mu_{11} \leq \mu_{21}$.
This completes the proof.  \qed
\end{proof}

Proposition~\ref{proposition:density} implies that in an optimal
solution the graph vertices can be {\em ordered} in such a way so that
subgraph density, as specified by the proposition, decreases along
this order. 
This observation motivates the following {\em greedy} algorithm for
solving the problem of discovering nested communities:

\medskip
\noindent
{\bf Algorithm outline:} Greedy--add--densest--subgraph
\begin{enumerate}
\item
Start with $S$, the set of source vertices.
\item
Given the current set $S$, find a subset of vertices $T$ that maximize
$\dnst{T, S \cup T}$.
\item
Set $S\leftarrow S\cup T$, and repeat the previous step until
the set $S$ includes all the vertices of the graph.
\item
Consider the vertices in the order discovered by the previous
process. 
Find the optimal sequence of $k$ nested communities that respects this order.
\end{enumerate}

One potential problem with the above greedy approach is that the
subroutine that is called iteratively in step 2, is an \np-hard problem.
This is formalized below as problem \textsc{DenseSuperset}.
\if\fullproof1
The proof of Proposition~\ref{proposition:np-dense} is given in Section~\ref{section:np-hardness}.
\fi

\begin{problem}[\textsc{DenseSuperset}]
Given a weighted graph $G = (V, E, \wght{})$ and a subset of vertices $S
\subseteq V$, 
find a subset of vertices $T$ maximizing  $\dnst{T, S \cup T}$.
\end{problem}

\begin{proposition}
\label{proposition:np-dense}
The \textsc{DenseSuperset} problem is \np-hard.
\end{proposition}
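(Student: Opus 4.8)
The plan is to establish \np-hardness by a polynomial-time reduction from the \textsc{Clique} problem: given a graph $H = (U, F)$ and an integer $q$, decide whether $H$ contains a clique on at least $q$ vertices. The guiding idea is to engineer a \textsc{DenseSuperset} instance in which the only vertex subsets with positive density are those inducing cliques of $H$, and in which, among cliques, the density grows strictly with the number of vertices. An optimal $T$ will then encode the clique number of $H$.

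For the construction I would take $V = U \cup \set{s}$ for a fresh \emph{anchor} vertex $s$, set the source set to $S = \set{s}$, and make $G$ a complete weighted graph by assigning $\wght{\set{u,v}} = 1$ for every $\set{u,v} \in F$, $\wght{\set{u,v}} = -M$ for every non-edge $\set{u,v}$ of $H$ (with $M$ a large penalty, say $M > \choosetwo{\abs{U}}$), and $\wght{\set{u,s}} = \nicefrac{1}{2}$ for every $u \in U$. Since weights are allowed to be arbitrary reals, the penalties are legitimate. Recalling that the denominator $\abs{\nbhd{T, S \cup T}}$ counts \emph{all} potential pairs in the completed representation, a direct count shows that for any $T \subseteq U$ with $t = \abs{T}$,
\[
\dnst{T, S \cup T} = \frac{m - M\pr{\choosetwo{t} - m} + \nicefrac{t}{2}}{\choosetwo{t+1}},
\]
where $m$ is the number of edges of $H$ inside $T$ (so $\choosetwo{t} - m$ counts the non-edges).

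The analysis would then proceed in three steps. First, by the choice of $M$, any $T$ containing a non-edge has a negative numerator, hence negative density, while every clique (including a single vertex) has positive density; so an optimal $T$ must induce a clique of $H$. Second, setting the non-edge count to zero collapses the formula to $\nicefrac{t}{(t+1)}$, which is strictly increasing in $t$; thus among cliques the density is maximized by the \emph{largest} clique, and the optimum equals $\nicefrac{q^*}{(q^*+1)}$, where $q^*$ is the clique number of $H$. Third, I would dispose of the only remaining case, $s \in T$: writing $T = \set{s} \cup T'$ yields $\dnst{T, S \cup T} = \dnst{T', S \cup T'}$, so the anchor may be ignored. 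It follows that $H$ has a clique on at least $q$ vertices if and only if the optimum of the instance is at least $\nicefrac{q}{(q+1)}$; since an algorithm for \textsc{DenseSuperset} reveals $q^*$ in polynomial time, it decides \textsc{Clique}, and \np-hardness follows.

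The main obstacle is the forcing argument, not the arithmetic. I must choose the penalty $M$ and the anchor weight together so that (i) every non-clique is strictly dominated by every clique, and (ii) the anchor term tips the comparison among cliques in favor of larger ones. Requirement (ii) is precisely why the anchor weight must lie strictly between $0$ and $1$ (the value $\nicefrac{1}{2}$ makes $\nicefrac{t}{(t+1)}$ increasing). Securing both monotonicities simultaneously, and confirming that the anchor vertex contributes nothing when placed inside $T$, is the delicate part; once the completed-graph counting of the denominator is set up correctly, the density evaluations are routine.
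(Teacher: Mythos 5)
Your reduction is correct in substance and shares its skeleton with the paper's proof: both reduce from \textsc{Clique} by attaching an anchor vertex $s$, taking $S=\set{s}$, completing the graph, and arranging that among cliques the density $\dnst{T,S\cup T}$ is strictly increasing in $\abs{T}$, so that the optimum reads off the clique number. Where you genuinely diverge is in how non-cliques are eliminated. The paper keeps non-edges at weight $0$ and instead calibrates the anchor weight to $\alpha = 1-\frac{1}{2\abs{V}^2}$, just below $1$; it then needs a two-sided ratio inequality (its Lemma~\ref{lem:ratio}) to show simultaneously that an $N$-clique beats every smaller set \emph{and} every larger non-clique, the latter being the delicate direction since a large non-clique missing one edge has density very close to that of a clique. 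Your negative penalty $-M$ on non-edges decouples these two requirements: non-cliques are crushed outright, and the anchor weight can be fixed at the convenient value $\nicefrac{1}{2}$, giving the clean closed form $\nicefrac{t}{(t+1)}$. This buys a shorter, more transparent argument at the cost of using negative weights, which the paper's construction avoids (its weights all lie in $[0,1]$, so it would survive a nonnegativity restriction on $\wght{}$; yours would not). Two small repairs: your stated bound $M > \choosetwo{\abs{U}}$ does not actually force a negative numerator for every non-clique (for $T$ with exactly one missing edge the numerator is $\choosetwo{t}-1+\nicefrac{t}{2}-M$, which can be positive when $\abs{U}\ge 4$), so either take $M \ge \abs{U}^2$, or weaken the claim to the sufficient statement that such a $T$ has density below $\nicefrac{1}{2}$, the density of a singleton clique, which $M > \choosetwo{\abs{U}}$ does guarantee; and you should exclude the degenerate choices $T=\emptyset$ and $T=\set{s}$, for which $\nbhd{T,S\cup T}$ is empty and the density undefined.
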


\if\fullproof0
\begin{proof}[Sketch]
Due to space constraints we will only sketch the proof. The complete proof
is available in Appendix.\!\footnote{For the appendix, see \url{http://users.ics.aalto.fi/~ntatti/}} We will reduce \textsc{Clique} to \textsc{DenseSuperset}.
Given a graph $G$, we add a vertex $s$ and connect it to each vertex with a weight of
$\alpha = 1 - \frac{1}{2 \abs{V}^2}$. Let $k < n < m$. It follows that
\[
	\frac{\choosetwo{n}  + \alpha n}{{n \choose 2} + n} > \frac{\choosetwo{k}  + \alpha k}{{k \choose 2} + k} \quad\text{and}\quad
	\frac{\choosetwo{n}  + \alpha n}{{n \choose 2} + n} > \frac{\choosetwo{m}  + \alpha m - 1}{{k \choose 2} + m}\quad.
\]
The left-hand side term in the first equation is the density of $n$-clique while the
the right-hand side term bounds the density of a graph with $k$ vertices. The right-hand side term in the second equation upper bounds
the density of a non-clique with $m$ vertices. Consequently, the
largest clique, say $X$, in $G$ will also have the largest density $\dnst{X, X \cup s}$, which is a sufficient to prove the result.\qed
\end{proof}
\fi

Similarly, one can think of solving the problem by working on the
opposite direction, that is, start with the whole vertex set $V$
and ``peel off'' the set $V$ by removing the sparsest subgraph, until
left with the set of source vertices $S$.
The corresponding algorithm will be the following.

\medskip
\noindent
{\bf Algorithm outline:} Greedy--remove--sparsest--subgraph
\begin{enumerate}
\item
Start with $V$, the vertex set of $G$.
\item
Given a current set $V$,
find a subset of vertices $T$ that does not include the source vertex set
$S$ and minimizes the density $\dnst{T, V}$.
\item
Set $V\leftarrow V\setminus T$, and repeat the previous step until
left only with the set of source vertices $S$.
\item
Consider the vertices in the order removed by the previous process. 
Find the optimal sequence of $k$ nested communities that respects this order.
\end{enumerate}

Not surprisingly, the problem of finding the sparsest subgraph, which
corresponds to step 2 of the above process is \np-hard.
\if\fullproof1
The proof is given again in Section~\ref{section:np-hardness}.
\fi

\begin{problem}[\textsc{SparseNbhd}]
Given a weighted graph $G = (V, E,\wght{})$
find a set of vertices $T$ minimizing $\dnst{T, V}$.
\end{problem}

\begin{proposition}
\label{proposition:np-sparse}
The \textsc{SparseNbhd} problem is \np-complete.
\end{proposition}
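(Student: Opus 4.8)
The plan is to prove that \textsc{SparseNbhd} is \np-complete by reduction from a known \np-hard problem, and the natural candidate is a minimum-density / sparsest-subgraph problem, or more cleanly a cut-type problem. First I would carefully state the decision version of \textsc{SparseNbhd}: given $G$, a threshold $\delta$, decide whether there exists $T \subseteq V$ with $\dnst{T, V} \leq \delta$. Membership in \np\ is the easy half: a witness is the set $T$ itself, and verifying $\dnst{T, V} = \wght{T, V}/\abs{\nbhd{T, V}} \leq \delta$ requires only summing the cross-edge weights and counting the potential cross pairs, both polynomial. Note that $\nbhd{T, V}$ counts all edges with at least one endpoint in $T$, so the denominator is $\abs{T}\abs{V} - \choosetwo{\abs{T}}$ (the number of pairs meeting $T$), which is computable directly from $\abs{T}$ and $\abs{V}$.

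For \np-hardness, the key observation is that minimizing the ratio $\wght{T, V}/\abs{\nbhd{T, V}}$ trades off a small numerator against a controllable denominator, and the denominator depends only on $\abs{T}$. This suggests building a gadget where the hard combinatorial content is pushed into the numerator. I would reduce from a problem such as \textsc{Clique} (mirroring the companion proof for \textsc{DenseSuperset}) or from a sparsest-cut / densest-subgraph variant. The cleanest route is likely to exploit a complementation: a very sparse neighborhood $\dnst{T, V}$ corresponds, after adding a suitable complete weighting, to a very dense induced subgraph, so that finding the sparsest $T$ is equivalent to finding the densest complementary structure. Concretely, I would take an unweighted instance $G=(V,E)$ for \textsc{Clique} of size $k$, form a complete graph on $V$ in the completed representation $\full{G}$, and assign edge weights so that non-edges of $G$ get large weight and edges of $G$ get small (or zero) weight; then a set $T$ whose complement $V \setminus T$ induces a clique in $G$ has no expensive non-edges among the remaining vertices, forcing the cheapest possible cross-weight for its fixed cardinality. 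Choosing the weight scale analogously to the $\alpha = 1 - \frac{1}{2\abs{V}^2}$ trick in the \textsc{DenseSuperset} sketch will let me separate the clique case from the non-clique case by a strict inequality.

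The main obstacle I anticipate is the interaction between the numerator and the cardinality-dependent denominator: because $\abs{\nbhd{T, V}}$ grows with $\abs{T}$, the optimal $T$ may prefer a particular size, and the reduction must guarantee that the size chosen by an optimal \textsc{SparseNbhd} solution coincides with the size of the target clique (or its complement). I would handle this by proving two separating inequalities, one bounding the density achievable by any $T$ of the ``wrong'' size and one giving the exact density when $V \setminus T$ is a clique of the target size, so that the witnessing set is forced to have the correct cardinality and structure. The algebra here is the routine part; the delicate step is calibrating the weight parameter $\alpha$ and the threshold $\delta$ so that these two inequalities hold simultaneously for all admissible sizes, exactly as the polynomial correction term $-1$ and the factor $\frac{1}{2\abs{V}^2}$ accomplish in the \textsc{DenseSuperset} argument.
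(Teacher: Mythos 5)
Your \np-membership argument and your observation that the denominator depends only on $\abs{T}$ are fine (modulo a small slip: the number of pairs meeting $T$ is $\abs{T}(\abs{V}-1) - \choosetwo{\abs{T}}$, not $\abs{T}\abs{V} - \choosetwo{\abs{T}}$), but the hardness gadget as described fails for two reasons. First, the direction is inverted. Since $\nbhd{T,V}$ consists of all pairs meeting $T$, we have $\wght{T,V} = \wght{E} - \wght{E(V\setminus T)}$: the only weight that escapes the numerator is the weight lying entirely inside the complement $V\setminus T$. Hence, for fixed $\abs{T}$, minimizing $\dnst{T,V}$ means \emph{maximizing} the internal weight of $V\setminus T$. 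Under your weighting (non-edges of $G$ heavy, edges light), the optimal complement is therefore an independent set of $G$, not a clique; if $V\setminus T$ is a clique it contains none of the heavy non-edges, so they all land in the cross set and the numerator is maximal rather than minimal. Second, and more fundamentally, even after fixing the sign you are missing the idea that makes the reduction work at all: with any weighting of this kind the numerator is dominated by vertex degrees, so the sparsest neighborhood is essentially a single minimum-degree vertex (density roughly $\degree{v}/(\abs{V}-1)$, and $0$ if $G$ has an isolated vertex), which carries no information about cliques.

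The paper's proof neutralizes this with a degree-equalization gadget: it adds an apex vertex $s$ joined to each $v \in V$ with weight $\alpha - \degree{v}$ (plus a second vertex $t$ and a huge edge $(s,t)$ whose weight keeps $s$ and $t$ out of any optimal solution), so that every original vertex has identical weighted degree $\alpha$. Then for $X \subseteq V$ the numerator becomes $\alpha\abs{X} - \wght{E(X)}$, and minimizing the density over sets of a given size is exactly maximizing internal edge weight, i.e., finding a clique --- and the clique is $T$ itself, not its complement. The size calibration that you correctly anticipate is then handled by the same ratio lemma and the same $\alpha$-tuning as in the \textsc{DenseSuperset} proof. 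Your plan would need both the complementation repaired (or the reduction restated from \textsc{IndependentSet}) and some analogue of the degree-equalization step before the two separating inequalities could be made to do any work.
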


\if\fullproof0
\begin{proof}[Sketch]
Due to space constraints we will only sketch the proof. The complete proof
is available in Appendix. We will reduce \textsc{Clique} to \textsc{SparseNbhd}.
Assume that we are given a graph $G$ with $l$ nodes. We extend the graph by adding two vertices $s$ and $t$ with an edge of such high weight
that neither $s$ or $t$ will appear in the optimal solution. We then add an edge
from $s$ to each vertex $v$ in $G$ with a weight of
$p - \deg\pr{v}$, where $p = (l + 1) - \frac 12 (l + 1)^{-2}$.
This will make the weighted degree of all vertices in $G$ equal
so a dense subgraph $X$ will have a low density $\dnst{X, X \cup \set{s, t}}$.
Let $k < n < m$. Then a straightforward calculation reveals that
\[
	\frac{pn  - \choosetwo{n}}{(l + 1)n - \choosetwo{n}} < \frac{pk  - \choosetwo{k}}{(l + 1)k - \choosetwo{k}}
	\quad\text{and}\quad
	\frac{pn  - \choosetwo{n}}{(l + 1)n - \choosetwo{n}} < \frac{pm  - \choosetwo{m} + 1}{(l + 1)m - \choosetwo{m}}\quad.
\]
The left-hand side term in the first equation is the density of $n$-clique while the
right-hand side term bounds the density of a graph with $k$ vertices. The right-hand side term in the second equation lower bounds
the density of
a non-clique with $m$ vertices.
Consequently,
the largest clique, say $X$, in $G$ will also have the lowest density $\dnst{X, X \cup \set{s, t}}$, which is a sufficient to prove the result.\qed
\end{proof}
\fi

\subsection{Algorithm for Discovering Nested Communities}
\label{section:algorithms}

Armed with intuition from the previous section, we
now proceed to discuss the proposed algorithm.
The underlying principle of both of the greedy algorithms described
above is to consider the vertices of the graph in a specific order
and then find a sequence of nested communities that respects this
order.
In one case, 
the order of graph vertices is obtained by starting from
$S$ and iteratively adding the densest subgraph, 
while in the other case, 
the order is obtained by starting from the full vertex set $V$ and
iteratively removing the sparsest subgraph. 

Our algorithm is an instantiation of this general principle. 
We specify in detail 
($i$) how to obtain an order of the graph vertices, and
($ii$) how to find a sequence of nested communities that respects a
given order. 

We start our discussion from the second task, i.e.,
finding the sequence of nested communities given an order.
As it turns out, this problem is an instance of sequence
segmentation problems.
We define this problem below, which is a refinement of Problem~\ref{prb:chain}.

\begin{problem}[Sequence of nested communities from a given order]
\label{prb:chainorder}
Given a graph $G = (V, E, \wght{})$ \emph{with ordered vertices},  
a set of source vertices $S = \enset{v_1}{v_s} \subset V$, and an
integer $k$, 
find a monotonically increasing sequence of $k + 1$ integers 
$b = \enpr{b_0 = s}{b_k = \abs{V}}$ such that 
\[
\fm{V} = \pr{ S = V_0 \subseteq V_1 \subseteq \cdots \subseteq V_k = V},\quad\text{where}\quad V_k = \enset{v_1}{v_{b_i}},
\]
minimizes the density-uniformity score $\score{\fm{V}}$ and
satisfies the monotonicity constraint
$\dnst{V_i} < \dnst{V_{i - 1}}$ for $i = 1, \ldots, k$. 
\end{problem}

It is quite easy to see that Problem~\ref{prb:chainorder} can be cast
as a segmentation problem. 
Typical segmentation problems can be solved optimally using
dynamic programming, as shown by Bellman~\cite{Bellman}.
The most interesting aspect of Problem~\ref{prb:chainorder}, 
seen as segmentation problem, is the monotonicity constraint 
$\dnst{V_i} < \dnst{V_{i - 1}}$, for $i = 1, \ldots, k$. 
That is, not only we ask to segment the ordered sequence of vertices
so that we minimize the density variance on the segments,
but we also require that the density scores of each segment decrease
monotonically. 
The situation can be abstracted to the monotonic segmentation problem
stated below. 

\begin{problem}[Monotonic segmentation]
\label{prb:monotone}
Let $\enlst{a_1}{a_n}$ and $\enlst{x_1}{x_n}$ be two sequences of real numbers.
Given an integer $k$, find $k + 1$ indices $b_0 = 1, \ldots, b_k = n + 1$ minimizing 
\[
	\sum_{j = 1}^n \sum_{i = b_{j - 1}}^{b_j - 1} a_i(x_i - \mu_j)^2, 
\]
where $\mu_j$ is the weighted centroid of $j$-th segment such that $\mu_{j} < \mu_{j - 1}$.
\end{problem}

In order to express Problem~\ref{prb:chainorder} with Problem~\ref{prb:monotone}, consider
a group of edges, $P_i = \nbhd{v_i, \enset{v_1}{v_{i - 1}}}$ for each vertex $v_i \in V \setminus S$.
If we set $a_i = \abs{P_{i + \abs{S}}}$ and $x_i = \dnst{P_{i + \abs{S}}}$,
we can apply Lemma~\ref{lem:centroid} and show that the score of community sequence
is equal to the variance minimized by Problem~\ref{prb:monotone}, plus a constant.
In fact, this constant is the sum of the variances within each $P_i$.

Similarly to the unconstrained segmentation problem, the monotonic
segmentation problem can be solved {\em optimally}. 
The idea is to use as preprocessing step the classic ``pool of
adjacent violators'' algorithm (PAV)~\cite{PAV}, 
which merges points until there are no monotonicity violations, 
and then apply the classic dynamic-programming algorithm on the
resulting sequence of merged points. This algorithm runs in $O(\abs{V})$ time.
By definition the merged points do not contain any monotonicity
violations, and thus, the resulting segmentation respects the
monotonicity constraint, as well. 
As shown by Haiminen et al.~\cite{unimodal}, this two-phase algorithm 
gives the optimal $k$ segmentation under the monotonicity constraints.
As a result of the optimality of the monotonic segmentation problem, 
Problem~\ref{prb:chainorder} can be solved optimally.

\smallskip
We next proceed to discuss the first component of the algorithm, 
namely, how to obtain an order of the graph vertices. 
Recall that, according to the principles discussed in the previous
section, we can either start from $S$ and iteratively add dense
subgraphs,
or start from $V$ and remove sparse subgraphs. 
We follow the latter approach. 
In order to overcome the  \np-hard problem of finding the sparsest
subgraph and in order to obtain a total order, 
we use the heuristic of iteratively removing the sparsest subgraph of
size one, namely, a single vertex.
The sparsest one-vertex subgraph is simply the vertex with the
smallest weighted degree. 
Thus, overall, we obtain the simple algorithm \sortnodes, whose
pseudocode is given as Algorithm~\ref{algorithm:sort}.

As an interesting side remark, we note that the algorithm \sortnodes
is encountered in the context of finding subgraphs with the highest average degree. 
In particular, it is known that the densest subgraph obtained by the
algorithm during the process of iteratively removing the
smallest-degree vertex is a factor-2 approximation to the optimally
densest subgraph in the graph~\cite{Charikar}.

The natural question to ask is how good is the order produced by algorithm
\sortnodes?  As we will demonstrate shortly, it turns out that the order is
quite good.
First, we note that the optimal solution obtained for Problem~\ref{prb:chainorder},
satisfies an analogous structural property, with respect to subgraph
densities, as the optimal solution for Problem~\ref{prb:chain},
We omit the proof of the following proposition as it is similar to the one of
Proposition~\ref{proposition:density}.

\begin{proposition}
\label{proposition:optimality-order}
Consider a graph $G = (V, E, \wght{})$ with ordered vertices, a set of
source vertices $S \subset V$, and an integer $k$. 
Let $\fm{V} = \pr{ S = V_0 \subseteq V_1 \subseteq \cdots \subseteq
  V_k = V}$ 
be the optimal sequence of nested communities with respect to  the order, that is, 
a solution to Problem~\ref{prb:chain}.
Fix $i$ such that $1 \leq i \leq k - 1$ and let $b = \abs{V_i}$.
Let $X \subseteq V_{i + 1} \setminus V_i$
and $Y  \subseteq V_i \setminus V_{i - 1}$ such that $X = \enset{v_{b + 1}}{v_{b + \abs{X}}}$
and $Y = \enset{v_{b - \abs{Y} + 1}}{v_{b}}$. 
Then $\dnst{X, X \cup V_i} \leq \dnst{Y, V_i}$.
\end{proposition}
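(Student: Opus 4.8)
The plan is to transplant the proof of Proposition~\ref{proposition:density} essentially verbatim, since the algebraic core carries over unchanged and only one feasibility check is genuinely new. First I would set $C_1 = E(V_{i+1}) \setminus E(V_i)$ and $C_2 = E(V_i) \setminus E(V_{i-1})$, and split each into the part incident to the boundary set and the rest: $D_{11} = \nbhd{X, X \cup V_i}$ with $D_{12} = C_1 \setminus D_{11}$, and $D_{21} = \nbhd{Y, V_i}$ with $D_{22} = C_2 \setminus D_{21}$. Writing $\mu_{ij} = \dnst{D_{ij}}$ and $\lambda_i = \dnst{C_i}$, Lemma~\ref{lem:centroid} produces exactly the same three expressions as before for the original score $s = \score{C_1} + \score{C_2}$ and for the two competitors $s_1 = \score{C_1 \cup D_{21}} + \score{D_{22}}$ (detach $Y$ from $V_i$) and $s_2 = \score{D_{12}} + \score{C_1 \cup D_{11}}$ (attach $X$ to $V_i$), all sharing the same additive constant.

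The one place where the ordered hypothesis enters, and the reason this is stated as a separate proposition, is in arguing that $s_1$ and $s_2$ arise from \emph{feasible} ordered community sequences. Here the prefix/suffix form of $X$ and $Y$ is exactly what is needed: since $V_i = \enset{v_1}{v_b}$, attaching the prefix $X = \enset{v_{b+1}}{v_{b + \abs{X}}}$ to $V_i$ produces the community $\enset{v_1}{v_{b + \abs{X}}}$, while detaching the suffix $Y = \enset{v_{b - \abs{Y} + 1}}{v_b}$ produces $\enset{v_1}{v_{b - \abs{Y}}}$; in both cases every community of the perturbed sequence is again an initial segment of the vertex order, so the perturbed sequence respects the order. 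I would then verify that these two perturbations keep the density-monotonicity constraint $\dnst{V_j} < \dnst{V_{j-1}}$ intact, so that optimality of $\fm{V}$ yields $s \le s_1$ and $s \le s_2$.

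Given these two inequalities the remainder is pure arithmetic, identical to the earlier proof: they reduce to $\abs{\mu_{21} - \lambda_2} \le \abs{\mu_{21} - \lambda_1}$ and $\abs{\mu_{11} - \lambda_1} \le \abs{\mu_{11} - \lambda_2}$, and since feasibility of $\fm{V}$ gives $\lambda_2 \ge \lambda_1$ (the inner segment is the denser one), these force $\mu_{21} \ge (\lambda_1 + \lambda_2)/2 \ge \mu_{11}$. Identifying $\mu_{11} = \dnst{X, X \cup V_i}$ and $\mu_{21} = \dnst{Y, V_i}$ then delivers the claim.

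The main obstacle I anticipate is not the algebra, which transfers without change, but the feasibility bookkeeping for $s_1$ and $s_2$. The order-respecting half of this is made automatic precisely by restricting $X$ and $Y$ to the vertices straddling the cut at position $b$, and this is what distinguishes the statement from Proposition~\ref{proposition:density}; the density-monotonicity half is the genuinely delicate check and is the step I would write out most carefully.
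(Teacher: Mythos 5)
Your proposal is essentially the paper's own argument: the paper omits this proof precisely because it is the proof of Proposition~\ref{proposition:density} transplanted with the same decomposition $C_1, C_2, D_{11}, D_{21}$ and the same application of Lemma~\ref{lem:centroid}, and your added observation --- that the prefix/suffix form of $X$ and $Y$ is what makes the two perturbed sequences again respect the vertex order, this being the only genuinely new ingredient --- is exactly the point the proposition is designed around. The feasibility and $\lambda_2 \geq \lambda_1$ checks you flag as delicate are treated at the same (brief) level of rigor in the paper's proof of Proposition~\ref{proposition:density}, so there is no substantive difference in approach.
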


The only difference between Proposition~\ref{proposition:density} and
Proposition~\ref{proposition:optimality-order} is that in
Proposition~\ref{proposition:optimality-order} we require additionally that
$V_{i + 1}$ starts with $X$ and $V_i$ ends with $Y$ with respect to  the order. We want
this condition to be redundant, otherwise the given order is suboptimal.  For
example, consider the adjacency matrix of $G$ given in
Figure~\ref{fig:viz:b}.  
The given segmentation is optimal with respect to  the given order. 
However if we rearrange the
vertices in $D_1$ and $D_2$, given in Figure~\ref{fig:viz:c}, then the same
segmentation is no longer optimal as $X$ and $Y$ violate Proposition~\ref{proposition:optimality-order}.
The additional condition in  Proposition~\ref{proposition:optimality-order} becomes redundant if $V_i$ ends with the sparsest subset
while $V_{i + 1}$ starts with densest subset.
We will show that the algorithm \sortnodes produces an order that satisfies this
property {\em approximately}.
The exact formulation of our claim is given as
Propositions~\ref{proposition:left} and~\ref{proposition:right}.

\begin{figure}[b]
\begin{center}
\hfill
\subfigure[\label{fig:viz:b}Original order]{
\begin{tikzpicture}[yscale=-1]
\adjbox{15}{yafcolor5!50!white};
\adjbox{11}{yafcolor3!70!white};
\adjbox{3}{yafcolor2!80};
\adjblock{14}{1}{0}{yafcolor5!80};
\adjblock{13}{1}{1}{yafcolor5!90};
\adjblock{9}{2}{0}{yafcolor3!90};
\adjblock{7}{1}{3}{yafcolor3!30!white};
\adjblock{5}{1}{5}{yafcolor3!30!white};
\adjgrid{3}{10}{yafcolor3};
\adjgrid{11}{14}{yafcolor5};
\adjgrid{1}{2}{yafcolor2};
\node[inner sep = 1pt] (d1l) at (-0.4, 7\adjlength) {\scriptsize$D_1$};
\node[inner sep = 1pt] (d2l) at (-0.4, 13\adjlength) {\scriptsize$D_2$};
\node[inner sep = 1pt, anchor = east] (sl) at (-0, 1.5\adjlength) {\scriptsize$S$};

\draw[rounded corners, yafcolor3, line width = 0.6pt] (0,11\adjlength) -| (d1l.south);
\draw[rounded corners, yafcolor3, line width = 0.6pt] (0,3\adjlength) -| (d1l.north);

\draw[rounded corners, yafcolor5, line width = 0.6pt] (0,11\adjlength) -| (d2l.north);
\draw[rounded corners, yafcolor5, line width = 0.6pt] (0,15\adjlength) -| (d2l.south);

\end{tikzpicture}}\hfill
\subfigure[\label{fig:viz:c}Improved order]{
\begin{tikzpicture}[yscale=-1]
\adjbox{15}{yafcolor5!50!white};
\adjbox{11}{yafcolor3!70!white};
\adjbox{3}{yafcolor2!80};
\adjblock{12}{1}{2}{yafcolor5!80};
\adjblock{11}{1}{3}{yafcolor5!90};
\adjblock{3}{2}{4}{yafcolor3!90};
\adjblock{9}{2}{0}{yafcolor3!30};
\adjgrid{3}{10}{yafcolor3};
\adjgrid{11}{14}{yafcolor5};
\adjgrid{1}{2}{yafcolor2};

\node[inner sep = 1pt] (d1l) at (-0.4, 7\adjlength) {\scriptsize$D_1$};
\node[inner sep = 1pt] (d2l) at (-0.4, 13\adjlength) {\scriptsize$D_2$};
\node[inner sep = 1pt, anchor = east] (sl) at (-0, 1.5\adjlength) {\scriptsize$S$};

\node[inner sep = 1pt, anchor = east] at (-0, 10\adjlength) {\scriptsize$Y$};
\node[inner sep = 1pt, anchor = east] at (-0, 12\adjlength) {\scriptsize$X$};

\draw[rounded corners, yafcolor3, line width = 0.6pt] (0,11\adjlength) -| (d1l.south);
\draw[rounded corners, yafcolor3, line width = 0.6pt] (0,3\adjlength) -| (d1l.north);

\draw[rounded corners, yafcolor5, line width = 0.6pt] (0,11\adjlength) -| (d2l.north);
\draw[rounded corners, yafcolor5, line width = 0.6pt] (0,15\adjlength) -| (d2l.south);

\end{tikzpicture}}
\hspace*{\fill}
\end{center}
\caption{Consequences of Proposition~\ref{proposition:optimality-order}. If we reorder the vertices in $D_1$ and $D_2$, then an optimal solution
with respect to  the order may become suboptimal with respect to the improved order.}
\end{figure}
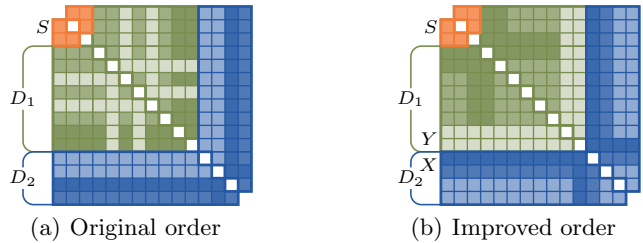

\begin{algorithm}
\label{algorithm:sort}
\caption{\sortnodes. Sort vertices of a weighted graph by iteratively removing a vertex with the least weight of adjacent edges.}
\Input{weighted graph $G = (V, E,\wght{})$, a set $S$}
\Output{order on $V$}

$W \define V \setminus S$\;
$o \define$ empty sequence\;
\While {$\abs{W} > 0$} {
	$x \define \arg \min_{x \in W} \dnst{x, W \cup S}$\;
	delete $x$ from $W$ and add $x$ to the beginning of $o$\;
}
add $S$ in an arbitrary order to the beginning of $o$\;
\Return{$o$}\;
\end{algorithm}

\begin{proposition}
\label{proposition:left}
Consider a weighted graph $G = (V, E,\wght{})$, whose vertices are
ordered by algorithm \sortnodes.  
Let $1 \leq b < c \leq \abs{V}$. 
Let $U = \enset{v_b}{v_c}$ and $W = \enset{v_1}{v_c}$. 
Let $f = \dnst{v_c, W}$.
Then $2f \leq \dnst{X, W}$ for any $X \subseteq U$.
\end{proposition}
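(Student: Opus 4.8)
The plan is to reduce everything to the single defining property of \sortnodes: at the step when $v_c$ is deleted, the vertices still present are exactly those of $W = \enset{v_1}{v_c}$, and $v_c$ is the one of smallest weighted degree among the removable (non-source) candidates. Writing $g_x = \wght{\set{x}, W}$ for the weighted degree of a vertex $x$ inside $W$, and noting that in the completed graph $\abs{\nbhd{v_c, W}} = c - 1$, the identity $f = \dnst{v_c, W} = g_{v_c}/(c-1)$ turns this minimality into the clean inequality $g_x \geq f(c-1)$ for every vertex $x$ of the removable tail, in particular for every $x \in X \subseteq U$.

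Next I would lift this per-vertex bound to the whole set $X$. Let $p = \abs{X}$, and split the cross weight as $\wght{X, W} = a + h$, where $a$ is the weight of the edges with both endpoints in $X$ and $h$ is the weight of the edges from $X$ to $W \setminus X$. Summing the weighted degrees double counts the internal edges, $\sum_{x \in X} g_x = 2a + h$, and since the weights are non-negative we have $2a + h \leq 2(a + h) = 2\wght{X, W}$. Chaining this with the degree bound gives $2\wght{X, W} \geq \sum_{x \in X} g_x \geq p\,f(c-1)$, hence $\wght{X, W} \geq \tfrac12 p f (c-1)$.

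Finally I would divide by the number of potential cross edges, which in the completed graph is $\abs{\nbhd{X, W}} = \choosetwo{p} + p(c - p) = \tfrac12 p(2c - p - 1)$. This yields
\[
\dnst{X, W} = \frac{\wght{X, W}}{\abs{\nbhd{X, W}}} \geq \frac{p f (c-1)/2}{p(2c - p - 1)/2} = \frac{f(c - 1)}{2c - p - 1} \geq \frac{f}{2},
\]
where the last step uses $2c - p - 1 \leq 2(c - 1)$, valid because $p \geq 1$. This is the factor-two statement of the proposition: the density of any tail subset $X$ stays within a factor two of $f$, so that $v_c$ is, up to a factor of two, the sparsest subset of the tail, $f \leq 2\dnst{X, W}$.

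The calculations are routine; the two subtle points are \emph{where} the factor of two lives and \emph{which} vertices the minimality of $v_c$ actually covers. The factor is genuinely necessary and enters in two equal halves — once through the handshake double count $2a + h \leq 2(a+h)$, and once through $2c - p - 1 \leq 2(c-1)$ — and the bound is tight: a two-vertex set $X$ whose weight is concentrated on a single internal edge makes both steps equalities and attains the ratio $1/2$ in the limit $c \to \infty$. The main obstacle is the second point. The greedy rule only certifies $v_c$ to be minimal among the non-source vertices still present, so the crucial step $g_x \geq f(c-1)$ applies verbatim only when $U$ lies in the removable tail (i.e.\ $b > \abs{S}$); a separate argument, or the standing assumption that $X$ avoids $S$, is needed to handle source vertices, which \sortnodes never removes and which may carry arbitrarily small weighted degree.
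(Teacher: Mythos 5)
Your proof is correct and takes essentially the same route as the paper's: both bound the degree sum $\sum_{x \in X} \wght{x, W}$ from above by $2\wght{X, W}$ via the handshake double-count and from below via the greedy minimality of $v_c$, then divide by $\abs{\nbhd{X, W}}$ --- your explicit completed-graph count $\abs{\nbhd{X, W}} = \tfrac{1}{2}p(2c - p - 1)$ (with $p = \abs{X}$) combined with $2c - p - 1 \leq 2(c-1)$ is precisely the paper's inequality $\sum_{x \in X} m_x \geq \abs{\nbhd{X, W}}$ written out in closed form. Two of your side remarks in fact apply equally to the paper's own argument: it too uses minimality of $v_c$ only among non-source vertices (so the implicit assumption that $X$ avoids $S$ is shared), and it too establishes $f \leq 2\,\dnst{X, W}$ rather than the literal claim $2f \leq \dnst{X, W}$, which must be a typo in the statement since it already fails for $X = \set{v_c}$.
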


\begin{proof}
Note that $s = \sum_{x \in X}\wght{x, W} = 2\wght{X} + \wght{X, W\setminus X} \leq 2\wght{X, W}$.
Write $m_x = \abs{\nbhd{x, W}}$. Since $v_c$ has the smallest $\dnst{v_c, W}$, we have
\[
	s =  \sum_{x \in X}m_x\dnst{x, W} \geq \dnst{v_c, W} \sum_{x \in X}m_x \geq \dnst{v_c, W} \abs{\nbhd{X, W}}\quad.
\]
Combining the inequalities and dividing by $\abs{\nbhd{X, W}}$ proves the result.\qed
\end{proof}

\begin{proposition}
\label{proposition:right}
Consider a weighted graph $G = (V, E,\wght{})$, 
whose vertices are ordered by algorithm \sortnodes.  
Let $1 \leq b < c \leq \abs{V}$. 
Let $U = \enset{v_b}{v_c}$ and $W = \enset{v_1}{v_{b - 1}}$.
Assume that there is $\alpha \geq 0$ such that for all 
$v \in U$ it is $\alpha\wght{v, W} \geq \wght{v, U}$.
Let $f = \dnst{v_b, W}$.
Then $(1 + \alpha)^2f \geq \dnst{X, X \cup W}$ for any $X \subseteq U$.
\end{proposition}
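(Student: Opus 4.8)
The plan is to expand both sides into elementary edge-weight quantities and then control them with two independent ingredients: the hypothesis $\alpha\wght{v, W} \ge \wght{v, U}$, which bounds the \emph{internal} structure of $X$, and the greedy minimality built into \sortnodes, which bounds the \emph{cross} weights $\wght{v, W}$ for $v \in U$. Since $X \subseteq U$ and $W = \enset{v_1}{v_{b-1}}$ are disjoint, writing $\wght{X}$ for the total internal weight of $X$, I would first record
\[
\dnst{X, X \cup W} = \frac{\wght{X} + \wght{X, W}}{\choosetwo{\abs{X}} + \abs{X}\abs{W}}, \qquad f = \dnst{v_b, W} = \frac{\wght{v_b, W}}{\abs{W}},
\]
using $\abs{\nbhd{v_b, W}} = \abs{W} = b - 1$. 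The goal is then to show the first ratio is at most $(1 + \alpha)^2 f$.

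First I would dispose of the internal edges of $X$. Counting each internal edge from both endpoints gives $2\wght{X} = \sum_{v \in X} \wght{v, X}$, and since $X \subseteq U$ each term obeys $\wght{v, X} \le \wght{v, U} \le \alpha \wght{v, W}$ by hypothesis; summing yields $2\wght{X} \le \alpha \wght{X, W}$, hence $\wght{X} + \wght{X, W} \le (1 + \tfrac{\alpha}{2})\wght{X, W}$. Dropping the nonnegative term $\choosetwo{\abs{X}}$ from the denominator then gives
\[
\dnst{X, X \cup W} \le \frac{(1 + \alpha/2)\,\wght{X, W}}{\abs{X}\abs{W}} = \pr{1 + \frac{\alpha}{2}} \cdot \frac{1}{\abs{X}} \sum_{v \in X} \dnst{v, W}.
\]

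The crux, and the step I expect to be the main obstacle, is bounding $\dnst{v, W}$ for $v \in U$ by a multiple of $f$. This looks backwards at first, because $v_b$ is a \emph{minimum}-degree vertex, so its cross-density $f$ is small while an arbitrary $v = v_j \in U$ might appear denser. The resolution is to invoke greedy minimality at the moment $v_j$ was deleted rather than at the current scale. Writing $R_j = \enset{v_1}{v_j}$ for the set still present when $v_j$ is removed, \sortnodes guarantees $\dnst{v_j, R_j} \le \dnst{v_b, R_j}$, since $v_b \in R_j$ is still present (it is removed later) and $v_j$ is selected as the minimizer; as both densities have denominator $j - 1$ this is $\wght{v_j, R_j} \le \wght{v_b, R_j}$. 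Now I would sandwich: on the left, $\wght{v_j, W} \le \wght{v_j, R_j}$ because $W \subseteq R_j$ and weights are nonnegative, while on the right, $\wght{v_b, R_j} = \wght{v_b, W} + \wght{v_b, \enset{v_{b+1}}{v_j}} \le \wght{v_b, W} + \wght{v_b, U} \le (1 + \alpha)\wght{v_b, W}$, again by the hypothesis. Chaining these gives $\wght{v_j, W} \le (1 + \alpha)\wght{v_b, W}$, that is, $\dnst{v, W} \le (1 + \alpha) f$ for every $v \in U$.

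Substituting this uniform per-vertex bound into the averaged inequality yields $\dnst{X, X \cup W} \le (1 + \tfrac{\alpha}{2})(1 + \alpha) f$, and since $1 + \tfrac{\alpha}{2} \le 1 + \alpha$ for $\alpha \ge 0$, this is at most $(1 + \alpha)^2 f$, as claimed. The only points that will need care are the implicit nonnegativity of the weights (used in the two monotonicity steps, exactly as in the proof of Proposition~\ref{proposition:left}) and the fact that $v_b$ must be a non-source vertex, so that it is a legitimate competitor in the greedy-minimality comparison at the removal time of $v_j$.
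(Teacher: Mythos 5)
Your proof is correct and follows essentially the same route as the paper's: split $\dnst{X, X\cup W}$ into internal and cross edges, absorb the internal weight via the $\alpha$-hypothesis, and then bound $\dnst{v,W}$ for $v\in U$ by $(1+\alpha)f$ using the greedy minimality of \sortnodes at the moment $v$ was removed (the paper phrases this via the single maximizing vertex $x\in X$ rather than averaging, and relies on the same completed-graph convention to turn density comparisons into weight comparisons). Your version even yields the marginally sharper constant $(1+\alpha/2)(1+\alpha)$, and your closing caveats about nonnegative weights and $v_b$ being a legitimate non-source competitor are exactly the implicit assumptions the paper's proof also uses.
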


\begin{proof}
Let $A = \nbhd{X, W}$ and $B = \nbhd{X, X}$. The density of $X$ is bounded by
\[
	\dnst{X, X \cup W} = \frac{\wght{A} + \wght{B}}{\abs{A} + \abs{B}} \leq \frac{\wght{A} + \alpha\wght{A}}{\abs{A} + \abs{B}} \leq \frac{(1 + \alpha)\wght{A}}{\abs{A}} = (1 + \alpha)\dnst{A}.
\]
Select $x \in X$ with the highest $\dnst{x, W}$. Then $\dnst{A} \leq \dnst{x, W}$. 
Let us prove that $\dnst{x, W} \leq (1 + \alpha)f$.
If $v_b = x$, then we are done. Assume that $v_b \neq x$. Since $G$ is fully-connected,
\sortnodes always picks the vertex with the lowest weight. Let $Z = \enset{v_1}{x}$.
Then
$\wght{x, W} \leq \wght{x, Z} \leq \wght{v_b, Z} = \wght{v_b, W} + \wght{v_b, U} \leq (1 + \alpha)\wght{v_b, W}$.
Since, $G$ is fully-connected $\wght{y, W} = \abs{W}\dnst{y, W}$ for any $y \in U$.
Hence, dividing the inequality gives us $\dnst{x, W} \leq (1 + \alpha)f$, which proves the proposition.\qed
\end{proof}

\if\fullproof1
\subsection{Hardness of Finding Dense and Sparse Subgraphs}
\label{section:np-hardness}

In this section we prove the \np-hardness results, stated in
Section~\ref{section:dense-and-sparse}.
We start with an auxiliary lemma.

\begin{lemma}
\label{lem:ratio}
Let $x, y, a, b, c$ be real numbers.
Let $r = b + (b + x)c / (y - x)$.
If
\[
a > r \text{ and } y > x \text{ or if } a < r \text{ and } x < y, \text{ then } \frac{x + a}{x + b} > \frac{y + a + c}{y + b}\quad.
\]
Similarly, if
\[
a < r \text{ and } y > x \text{ or if } a > r \text{ and } x < y, \text{ then } \frac{x + a}{x + b} < \frac{y + a + c}{y + b}\quad.
\]

\end{lemma}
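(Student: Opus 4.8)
The plan is to reduce both claims to a single sign computation. The quantity that controls everything is the numerator of the difference of the two fractions, so first I would bring $\frac{x+a}{x+b} - \frac{y+a+c}{y+b}$ over the common denominator $(x+b)(y+b)$ and expand the numerator. A short expansion cancels the $xy$ and $ab$ terms and leaves
\[
(x+a)(y+b) - (y+a+c)(x+b) = (a-b)(y-x) - c(x+b).
\]
Thus the direction of the fraction inequality is governed by the sign of $(a-b)(y-x) - c(x+b)$, modulo the sign of the product $(x+b)(y+b)$ of denominators. In the intended applications (the \np-hardness reductions, where $x,y,b$ are sizes and counts) these denominators are positive, so I would carry out the argument under the assumption $(x+b)(y+b)>0$; then the inequality between the two fractions holds in precisely the direction of the numerator's sign. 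Making this reduction legitimate by pinning down the denominators' signs is the one hypothesis that the statement leaves implicit.

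The second step is to observe that $r$ is defined exactly to factor this numerator. From $r = b + (b+x)c/(y-x)$ one computes $(y-x)(a-r) = (y-x)(a-b) - (b+x)c = (a-b)(y-x) - c(x+b)$, so the numerator equals $(y-x)(a-r)$. This identity is the crux: once it is in hand, the whole lemma becomes pure sign bookkeeping on the product $(y-x)(a-r)$, with no further algebra.

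Finally I would split into the sign patterns of the two factors. When $y>x$ the factor $y-x$ is positive, so $a>r$ forces $(y-x)(a-r)>0$ and hence $\frac{x+a}{x+b}>\frac{y+a+c}{y+b}$, whereas $a<r$ reverses it; when $x>y$ the factor $y-x$ is negative and the roles of $a>r$ and $a<r$ swap. Reading off the four combinations yields the two stated directions. I expect the main obstacle to be the hypotheses as typeset rather than the mathematics: the ``$>$'' conclusion lists the alternatives ``$a>r$ and $y>x$'' and ``$a<r$ and $x<y$'', but the second is inconsistent, since with $y>x$ fixed the product $(y-x)(a-r)$ is \emph{negative} when $a<r$, giving the opposite inequality. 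The factorisation shows the intended second alternative is ``$a<r$ and $x>y$'' (and likewise $x>y$ in place of $x<y$ in the ``$<$'' clause); I would state and prove this corrected version, for which the factor-sign analysis above goes through verbatim.
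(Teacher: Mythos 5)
Your proof is correct and takes essentially the same route as the paper: both arguments reduce the inequality to comparing the cross products $(x+a)(y+b)$ and $(y+a+c)(x+b)$, and your factorisation of their difference as $(y-x)(a-r)$ is just a tidier packaging of the paper's chain of equivalences. You are also right on both side points: the conclusion does require $(x+b)(y+b)>0$, a hypothesis the paper leaves implicit (it holds wherever the lemma is invoked in the \np-hardness reductions), and the condition ``$x<y$'' in the second alternative of each claim is indeed a typo for ``$x>y$'' --- as written it merely duplicates ``$y>x$'' and would yield the opposite inequality, and the paper's own application of the lemma (deducing Eq.~\ref{eq:boundabove} from $N-1>K-1$ together with $a<r$) confirms your corrected reading.
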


\begin{proof}
We will only prove the first case. The other 3 cases are similar.
We have $(x - y)a > (x - y)b + (b + x)c$ which is equivalent to $xy + ay + xb + ab> xy + ax + cx + by + bc + ab$.
The left-hand side is equal to $(x + a)(y + b)$ while the right hand side is equal to $(y + a + c)(x + b)$.
The lemma follows.\qed
\end{proof}

We now give the proofs of Propositions~\ref{proposition:np-dense} and~\ref{proposition:np-sparse}.

{
\renewcommand{\theproposition}{\ref{proposition:np-dense}}
\begin{proposition}
The \textsc{DenseSuperset} problem is \np-hard.
\end{proposition}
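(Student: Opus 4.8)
The plan is to reduce \textsc{Clique}, which is \np-complete, to \textsc{DenseSuperset}. Given an instance $(G, k)$ of \textsc{Clique} with $G = (V, E)$ unweighted and $N = \abs{V}$, I would construct a weighted instance by adjoining a single source vertex $s$, setting $S = \set{s}$, keeping every edge of $G$ at weight $1$, joining $s$ to every vertex of $G$ with weight $\alpha = 1 - \frac{1}{2N^2}$, and treating all remaining pairs as zero-weight edges of the completed graph. This transformation is polynomial, so it is enough to argue that every maximizer $T$ of $\dnst{T, S \cup T}$ is a maximum clique of $G$; the decision version of \textsc{Clique} is then answered by comparing $\abs{T}$ with $k$.

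First I would evaluate the objective on a candidate set $T \subseteq V$ of size $n$. The edge set $\nbhd{T, S \cup T}$ consists of the $\choosetwo{n}$ pairs inside $T$ together with the $n$ pairs joining $T$ to $s$, so $\dnst{T, S \cup T} = \frac{\wght{E(T)} + \alpha n}{\choosetwo{n} + n}$, where $\wght{E(T)} = \abs{E(T)} \leq \choosetwo{n}$ with equality exactly when $T$ is a clique. (Placing $s$ inside $T$ changes nothing, since then $S \cup T = T$ and the same ratio reappears.) Hence a clique of size $n$ achieves density $f(n) = \frac{\choosetwo{n} + \alpha n}{\choosetwo{n} + n} = \frac{n - 1 + 2\alpha}{n + 1}$, while any non-clique of size $m$ is bounded by $g(m) = \frac{\choosetwo{m} - 1 + \alpha m}{\choosetwo{m} + m}$.

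The core of the argument consists of two ratio comparisons, each an instance of Lemma~\ref{lem:ratio} (or checkable by cross-multiplication). The first is that $f$ is strictly increasing in $n$: from the closed form $f(n) = \frac{n - 1 + 2\alpha}{n + 1}$ one finds that $f(n + 1) - f(n)$ is proportional to $1 - \alpha > 0$, so larger cliques are strictly denser, which is why $\alpha < 1$ is essential. The second is that a maximum clique outscores every non-clique, i.e.\ $f(\omega) > g(m)$ for the maximum-clique size $\omega$ and every $m \leq N$. I would split this into the easy case $m \leq \omega$, where $g(m) < f(m) \leq f(\omega)$ because $g(m)$ and $f(m)$ share a denominator but $g$ has a numerator smaller by one, and the substantive case $m > \omega$.

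This last case is the main obstacle, and it is where the exact value of $\alpha$ is engineered. Writing $\alpha = 1 - \varepsilon$ with $\varepsilon = \frac{1}{2N^2}$, I would reduce $f(\omega) > g(m)$ to $\frac{\varepsilon}{\omega + 1} < \frac{1 + \varepsilon m}{m(m + 1)}$, then bound the left side by $\frac{\varepsilon}{2}$ and the right side below by $\frac{1}{N(N + 1)}$, so that the claim follows from $\frac{1}{4N^2} < \frac{1}{2N^2} < \frac{1}{N(N + 1)}$. The delicate point is that the margin by which a clique must beat a non-clique of size $m = N$ is only $\Theta(1/N^2)$, so $\alpha$ has to sit within $O(1/N^2)$ of $1$ to keep cliques strictly on top, yet stay strictly below $1$ to keep $f$ increasing; the value $1 - \frac{1}{2N^2}$ threads exactly this gap. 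Combining the two comparisons, every maximizer of $\dnst{T, S \cup T}$ is a maximum clique, and since the reduction is polynomial, \textsc{DenseSuperset} is \np-hard.
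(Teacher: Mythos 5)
Your proposal is correct and follows essentially the same route as the paper's own proof: the identical reduction from \textsc{Clique} with a single source vertex $s$, unit weights inside $G$, weight $\alpha = 1 - \frac{1}{2\abs{V}^2}$ on the edges to $s$, and the same two comparisons (cliques of larger size beat smaller sets because $\alpha < 1$; the maximum clique beats larger non-cliques because $\alpha$ is within $O(1/N^2)$ of $1$). The only differences are cosmetic: you verify the final inequality by rewriting the densities as $1$ minus a deficit and bounding, where the paper invokes its auxiliary ratio lemma, and you phrase the conclusion via the optimization version (every maximizer is a maximum clique) rather than the decision threshold $\beta$.
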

\addtocounter{proposition}{-1}
}

\begin{proof}
To prove the hardness, we will reduce \textsc{Clique}
to \textsc{DenseSuperset}. Let $G = (V, E)$ be the given graph.
Let us create a new graph $G'$ by adding one extra vertex, say $s$, to $G$ and connecting
every vertex in $G$ to $s$. We set $\wght{e}$ to be $1$ for any edge in $E$ and 
$\alpha$, which we will define later, if $e$ is adjacent to $s$.
Finally, we connect the non-connected vertices with edges of weight $0$.
We will use $G'$, $S = \set{s}$, and $\wght{}$ as inputs to \textsc{DenseSuperset}.

Our next step is to define $\alpha$ such that the maximum clique
will also have the largest density. In order to do that,
let $X$ be a clique of size $N$ in $G$. Then the weight of $X$ is equal to
\[
	\dnst{X, X \cup S} = \frac{{N \choose 2}  + \alpha N}{{N \choose 2} + N} = \frac{N - 1 + 2\alpha}{N  -1 + 2}\quad.
\]

If we have a non-clique subgraph of size $N$, then obviously its weight is
genuinely smaller than $\dnst{X, X \cup S}$.

Assume a set of vertices $Z$ with $K < N$ vertices. The weight of $Z$ is bounded by
\[
    \dnst{Z, Z \cup S} \leq \frac{{K \choose 2}  + \alpha K}{{K \choose 2} + K} = \frac{K - 1 + 2\alpha}{K - 1  + 2}\quad.
\]
We want $\dnst{X, X \cup S} > \dnst{Z, Z \cup S}$, which is guaranteed if
\begin{equation}
\label{eq:boundabove}
	\frac{N - 1 + 2\alpha}{N + 1} > \frac{K - 1 + 2\alpha}{K + 1}\quad.
\end{equation}
Since $N - 1 > K - 1$, Lemma~\ref{lem:ratio} implies that if
\[
	2\alpha < 2 + \frac{2 + N - 1}{(K - 1) - (N - 1)}0 = 2,
\]
then the inequality in Eq~\ref{eq:boundabove} is guaranteed.

Let $Y$ be a non-clique of size $M > N$ in $G$. Then the weight of $Y$ bounded by
\[
    \dnst{Y, Y \cup S} \leq \frac{{M \choose 2}  + \alpha M - 1}{{M \choose 2} + M} = \frac{M - 1 + 2\alpha - 2/M}{M - 1  + 2}\quad.
\]
We need to have $\dnst{X, X \cup S} > \dnst{Y, Y \cup S}$, which is guaranteed if
\begin{equation}
\label{eq:boundbelow}
  \frac{N - 1 + 2\alpha}{N  -1 + 2} > \frac{M - 1 + 2\alpha - 2/M}{M - 1  + 2}\quad.
\end{equation}

Since $N - 1 < M - 1$, Lemma~\ref{lem:ratio} guarantees that if
\[
	2\alpha > 2 + \frac{-2}{M}\frac{2 + N - 1}{(M - 1) - (N - 1)} = 2 - \frac{2(N + 1)}{M(M - N)},
\]
then the inequality in Eq.~\ref{eq:boundbelow} is guaranteed.
If we choose $\alpha = 1 - 0.5 / \abs{V}^2$, both inequalities in
Eqs.~\ref{eq:boundabove}--\ref{eq:boundbelow} are now guaranteed.

Let $k$ be the minimum size of the clique given as a parameter in \textsc{Clique}. Set $\beta = \frac{k - 1 + 2\alpha}{k  -1 + 2}$. 
If $G$ contains a clique of size $k$, then there is a subgraph in $G'$ with a density of $\beta$.
Assume now that $G'$ contains a subgraph, say $H$, with a density of at least $\beta$. $H$ must contain
at least $k$ vertices, otherwise bound in Eq.~\ref{eq:boundabove} is violated. $H$ must be a clique, otherwise
bound in Eq.~\ref{eq:boundbelow} is violated. Consequently, $G$ has a clique of size $k$ if and only if
$G'$ has a subgraph of density at least $\beta$. The reduction is polynomial.
This concludes the proof.\qed
\end{proof}

{
\renewcommand{\theproposition}{\ref{proposition:np-sparse}}
\begin{proposition}
The \textsc{SparseNbhd} problem is \np-hard.
\end{proposition}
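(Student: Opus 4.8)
The plan is to prove \np-hardness by a polynomial reduction from \textsc{Clique}; membership in \np is immediate, since given a candidate set $T$ one computes $\dnst{T, V}$ in polynomial time and checks it against a threshold. The reduction mirrors the structure of the proof of Proposition~\ref{proposition:np-dense}, but the gadget must now be engineered so that the \emph{sparsest} neighborhood, rather than the densest superset, encodes a maximum clique.

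First I would build the instance. Given a graph $G$ with $l$ vertices, I form $G'$ by adding two vertices $s$ and $t$, joining both of them to every vertex of $G$, and joining $s$ to $t$ by an edge of very large weight $M$. Each vertex $v$ of $G$ is connected to $s$ with weight $p - \degree{v}$ and to $t$ with weight $0$, where $p = (l + 1) - \frac12 (l + 1)^{-2}$; the original edges of $G$ keep weight $1$ and all remaining pairs become weight-$0$ edges, so that $G'$ is a complete graph. The term $p - \degree{v}$ equalizes the weighted degree of every vertex of $G$ to exactly $p$, and the heavy edge $(s, t)$ ensures that any $T$ meeting $\set{s, t}$ incurs the weight $M$ in $\nbhd{T, V}$; since the total number of edges is polynomially bounded, choosing $M$ large (but polynomial) forces the optimal $T$ to lie entirely inside $V$.

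Next I would compute densities. For a set $X \subseteq V$ of size $n$, the equal-degree property gives $\wght{\nbhd{X, V}} = pn - \wghtin{X}$, where $\wghtin{X}$ is the internal weight of $X$, while $\abs{\nbhd{X, V}} = (l + 1)n - \choosetwo{n}$ depends only on $n$. Hence at a fixed size the density is minimized exactly when the internal weight is maximized, i.e.\ by a clique, for which $\wghtin{X} = \choosetwo{n}$. It then remains to compare across sizes, and for this I would invoke Lemma~\ref{lem:ratio} twice, exactly as in the proof of Proposition~\ref{proposition:np-dense}, to verify for the chosen $p$ the two inequalities
\[
	\frac{pn - \choosetwo{n}}{(l + 1)n - \choosetwo{n}} < \frac{pk - \choosetwo{k}}{(l + 1)k - \choosetwo{k}}
	\quad\text{and}\quad
	\frac{pn - \choosetwo{n}}{(l + 1)n - \choosetwo{n}} < \frac{pm - \choosetwo{m} + 1}{(l + 1)m - \choosetwo{m}}
\]
for all $k < n < m$. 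The first shows that clique density strictly decreases with size, so no set smaller than a maximum clique can be sparser; the second, using that a non-clique of size $m$ misses at least one unit-weight edge and so has internal weight at most $\choosetwo{m} - 1$, shows that no larger set (necessarily a non-clique once $n$ is the maximum clique size) can be sparser either.

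Together these establish that the globally sparsest neighborhood is attained by a maximum clique of $G$. To close the reduction I would set the threshold $\beta$ equal to the density of a $k$-clique, $\beta = (pk - \choosetwo{k})/((l + 1)k - \choosetwo{k})$, and argue both directions: if $G$ has a $k$-clique, that clique realizes density $\beta$; conversely, if some $T$ satisfies $\dnst{T, V} \leq \beta$, then the minimum density over all subsets is at most $\beta$, and since this minimum equals the density of a maximum clique and clique density decreases in size, the maximum clique has size at least $k$, so $G$ contains a $k$-clique. The reduction is clearly polynomial. I expect the main obstacle to be purely in the calibration: checking that the single choice $p = (l + 1) - \frac12 (l + 1)^{-2}$ makes \emph{both} displayed inequalities hold simultaneously over all admissible $k, n, m$, and confirming that $M$ can be taken polynomially large enough to exclude $s$ and $t$ without perturbing these estimates.
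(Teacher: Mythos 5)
Your proposal is correct and follows essentially the same route as the paper's proof: the same gadget (two extra vertices $s,t$ with a prohibitively heavy edge between them, weights $p-\degree{v}$ on the edges to $s$ equalizing weighted degrees, with $p=(l+1)-\tfrac12(l+1)^{-2}$), the same two size-comparison inequalities verified via Lemma~\ref{lem:ratio}, and the same threshold argument in both directions. The only differences are cosmetic (e.g., naming the heavy weight $M$ rather than fixing it as $\abs{V'}\alpha$, and phrasing the fixed-size comparison via maximizing internal weight).
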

\addtocounter{proposition}{-1}
}

\begin{proof}
To prove the hardness, we will reduce \textsc{Clique}
to \textsc{SparseNbhd}. Let $G = (V, E)$ be the given graph.
We will define $G' = (V', E')$ as follows.  First we attach two vertices $s$ and $t$ to $G$.
Select one vertex, say $s$, from the clique and connect each vertex in $G$ to $s$.
We connect the non-connected vertices with edges of weight $0$.
Let $P = \abs{V'} - 1$.
We will weight the edges in $G$ with $1$, let us define $\alpha = P - 0.5/P^2$.
Set the weight of an edge $\wght{(s, n)} = \alpha - \deg\pr{n}$, for each $n \in V$.
Due to this scheme we have $\sum_{(n, y) \in E'} \wght{(n, y)} = \alpha$ for any $n \in V$.
Finally, we set $\wght{(s, t)} = \abs{V'}\alpha$. This weight is so large that no solution
for \textsc{SparseNbhd} will contain $s$ or $t$.

Let $X$ be a clique of size $N$ in $G$. Then the weight of $X$ is equal to
\[
	\dnst{X, V'} = \frac{\alpha N - {N \choose 2}}{PN - {N \choose 2}} = \frac{2\alpha - N + 1}{2P - N + 1}\quad.
\]

If we have a non-clique subgraph of size $N$, then obviously its weight is
genuinely larger than $\dnst{X, V'}$.

Assume a set $Z \subseteq V$ with $K < N$ vertices. The weight of $Z$ is bounded by
\[
    \dnst{Z, V'} \geq  \frac{\alpha K - {K \choose 2}}{PK - {K \choose 2}}= \frac{2\alpha - K + 1}{2P - K + 1}\quad.
\]
We want $\dnst{X, V'} < \dnst{Z, V'}$, which is guaranteed if
\begin{equation}
\label{eq:bs1}
	\frac{2\alpha - N + 1}{2P - N + 1} <  \frac{2\alpha - K + 1}{2P - K + 1}\quad.
\end{equation}
If we have a non-clique subgraph of size $N$, then obviously its weight is
genuinely smaller than $\dnst{X, X \cup S}$.

Since $-K + 1 > -N + 1$, Lemma~\ref{lem:ratio} implies that if
\[
	2\alpha < 2P + \frac{2P - N + 1}{(N - 1) - (K - 1)}0 = 2P,
\]
then the inequality in Eq~\ref{eq:bs1} is guaranteed. This is guaranteed by our choice of $\alpha$.

Let $Y \subseteq V$ be a non-clique of size $M > N$ in $G$. Then the weight of $Y$ bounded by
\[
	\dnst{Y, V'} \geq  \frac{\alpha M - {M \choose 2} + 1}{PM - {M \choose 2}}= \frac{2\alpha + 2/M - M + 1}{2P - M + 1}\quad.
\]
We need to have $\dnst{X, V'} < \dnst{Y, V'}$, which is guaranteed if
\begin{equation}
\label{eq:bs2}
	\frac{2\alpha - N + 1}{2P - N + 1} < \frac{2\alpha + 2/M - M + 1}{2P - M + 1}\quad.
\end{equation}

Since $-M + 1 < -N + 1$, Lemma~\ref{lem:ratio} guarantees that if
\[
	2\alpha >  2P + \frac{2}{M}\frac{2P - N + 1}{(N - 1) - (M - 1)} = 2P - \frac{2(2P - N + 1)}{M(M - N)}
\]
then Eq.~\ref{eq:bs2} is guaranteed. This is guaranteed by our choice of $\alpha$.

Let $k$ be the minimum size of the clique given as a parameter in \textsc{Clique}. Set $\beta = \frac{2\alpha - k + 1}{2P - k + 1}$.
If $G$ contains a clique of size $k$, then there is a subgraph in $G'$ with a density of $\beta$.
Assume now that $G'$ contains a subgraph, say $H$, with a density of at most $\beta$.
Note that $\beta$ is largest, when $k = 1$, that is, $\beta \leq \alpha / P$.
If $s$ or $t$ is contained in $H$, then the density is at least $2\wght{(s, t)} / P(P + 1) > \alpha / P$,
which is a contradiction. Hence $H$ is a subgraph of $G$.  $H$ must contain
at least $k$ vertices, otherwise bound in Eq.~\ref{eq:bs1} is violated. $H$ must be a clique, otherwise
bound in Eq.~\ref{eq:bs2} is violated. Consequently, $G$ has a clique of size $k$ if and only if
$G'$ has a subgraph of density at least $\beta$. The reduction is polynomial.
This concludes the proof.\qed
\end{proof}
\fi

\section{Related Work}
\label{sec:related}

Finding communities in graphs and social networks is one of the most
well-studied topics in graph mining.
The amount of literature on the subject is very extensive. 
This section cannot aspire to cover all the different approaches and
aspects of the problem, we only provide a brief overview of the area.

\textbf{Community detection.}
A large part of the related work deals with the problem of
partitioning a graph in disjoint clusters or communities. 
A number of different methodologies have been applied, such as 
hierarchical approaches~\cite{girvan2002community},
methods based on modularity maximization~\cite{agarwal2008modularity, clauset2004finding,
  girvan2002community, white2005spectral},
graph-theoretic approaches~\cite{flake2000efficient,flake2002self},
random-walk methods~\cite{DBLP:journals/jgaa/PonsL06,mcl,conf/iccS/ZhouL04}, 
label-propagation approaches~\cite{mcl}, 
and 
spectral graph
partition~\cite{chung1997spectral,karypis98multilevel,ng2001spectral,DBLP:journals/sac/Luxburg07}.
A thorough review on community-detection methods can be found on the
survey by Fortunato~\cite{fortunato2010community}.
We note that this line of work is different than the present paper,
since we do not aim at partitioning a graph in disjoint communities. 

\textbf{Overlapping communities.}
Researchers in community detection have realized that, 
in many real situations and real applications, 
it is meaningful to consider that graph vertices do not belong only to
one community. 
Thus, one asks to partition a graph into overlapping communities. 
Typical methods here rely on 
clique percolation~\cite{palla2005uncovering},
extensions to the modularity-based
approaches~\cite{DBLP:conf/pkdd/Gregory07, pinney2006betweenness}, 
analysis of ego-networks~\cite{DBLP:conf/kdd/CosciaRGP12},
or fuzzy clustering~\cite{Zhang2007overlapping}.
Again the problem we address in this paper is quite different. 
First, we find communities centered around a given set of source
vertices, and not for the whole graph.
Second, the communities output by our algorithm do not have arbitrary
overlaps, but they have a specific nested structure.

\textbf{Centerpiece subgraphs and community search.}
Perhaps closer to our approach is work related to the centerpiece
subgraphs and the community-search
problem~\cite{DBLP:conf/kdd/TongF06,DBLP:journals/tkdd/KorenNV07,DBLP:conf/kdd/SozioG10}.
In this class of problems, a set of source vertices $S$ is given and
the goal is to find a subgraph so that $S$ belongs in the
subgraph and the subgraph forms a tight community. 
The quality of the subgraph is measured with various objective
functions, such as degree~\cite{DBLP:conf/kdd/SozioG10}, 
conductance~\cite{DBLP:journals/tkdd/KorenNV07}, or 
random-walk-based measures~\cite{DBLP:conf/kdd/TongF06}.
The difference of these methods with the one presented here is that
these methods return only one community, while in this paper we
deal with the problem of finding a sequence of nested communities. 
 
\smallskip
In summary, despite the numerous research on the topic of community
detection in graphs and social networks, to the best of our knowledge,
this is the first paper to address the topic of nested communities
with respect to a set of source vertices. 
Furthermore, our approach offers novel technical ideas, such as
providing a solid theoretical analysis that allows to decompose the
problem of finding nested communities into two sub-problems:
($i$) ordering the set of vertices, and ($ii$) 
segmenting the graph vertices according to that given order.

\section{Experimental Evaluation}\label{sec:exps}

We will now provide experimental evidence that  our method efficiently discovers meaningful segmentations and that our ordering algorithm 
outperforms several natural baselines.

\textbf{Datasets and experimental setup.}
In our experiments we used six datasets, 
five obtained from Mark Newman's webpage,\!\footnote{\url{http://www-personal.umich.edu/~mejn/netdata/}}
and a bibliographic dataset obtained from DBLP.
The datasets are as follows:
{\em Adjnoun}:
adjacency graph of common adjectives and nouns in the novel David
Copperfield, 
by Charles Dickens.
{\em Dolphins}:
an undirected social graph of frequent associations between 62
dolphins in a community living off Doubtful Sound, New Zealand. 
{\em Karate}:
social graph of friendships between 34 members of a karate club at a
US university in the 1970s.
{\em Lesmis}:
coappearance graph of characters in the novel Les Miserables. 
{\em Polblogs}:
a directed graph of hyperlinks between weblogs on US politics,
recorded in 2005.
{\em DBLP}:
coauthorship graph between researchers in computer science. 
The statistics of these datasets are given in Table~\ref{tab:stats}.

For each dataset and a given source set $S$, we considered three different weighting schemes:
First we run personalized PageRank using the source node with a restart of $0.1$. Let $p(v)$
be the PageRank weight of each vertex. Given an edge $e = (v, w)$, we set three different weighting schemes,
\[
	\wghtnorm{e} = \frac{p(v)}{\degree{v}} + \frac{p(w)}{\degree{w}},\quad
	\wghtsum{e} = p(v) + p(w),\quad 
	\wghtmin{e} = \min(p(v), p(w)).
\]
These weights are selected so that the vertices that are hard to reach with a random walk
will have edges with small weights, and hence will be placed in outer communities. 
For \emph{DBLP}, we weighted the edges during PageRank computation with the number of joint papers, each paper normalized by the number of authors.
We use the vertex with the highest degree as a starting set.

\begin{table}[t]
\caption{Basic statistics of graphs (first two columns) and performance over hops baseline. The third column represents a typical running time while the fourth
column represents a typical number of entries during the segmentation. The last three columns represent the normalized score compared to the baseline score $\score{\fm{H}}$. 
}
\label{tab:stats}
\begin{tabular*}{\linewidth}{@{\extracolsep{\fill}}lrr@{\hspace{17pt}}rr@{\hspace{17pt}}rrr}
\toprule
&&&&&\multicolumn{3}{l}{performance $\score{\fm{V}} / \score{\fm{H}}$}
\\
\cmidrule{6-8}
Name & $\abs{V(G)}$ & $\abs{E(G)}$ & Time & $N$ &
$\wghtnorm{}$ & $\wghtsum{}$ & $\wghtmin{}$ 
\\
\midrule
\emph{Adjnoun} & 
112 & 425 &
2ms & 84 &
${0.90} / {0.95}$ & ${0.88} / {0.95}$ & ${0.77} / {0.94}$

\\
\emph{Dolphins} &
62 & 159 &
1ms & 41 &
${0.67} / {0.80}$ & ${0.61} / {0.78}$ & ${0.57} / {0.80}$

\\
\emph{Karate} &
34 & 78 &
1ms & 21 &
${0.78} / {0.91}$ & ${0.76} / {0.91}$ & ${0.60} / {0.93}$

\\
\emph{Lesmis} &
77 & 254 &
2ms & 37 &
${0.77} / {0.93}$ & ${0.84} / {0.94}$ & ${0.62} / {0.94}$

\\
\emph{Polblogs} &
$1\,222$ & $16\,714$ &
84ms & 872 &
${0.87} / {0.96}$ & ${0.95} / {0.99}$ & ${0.57} / {0.96}$

\\
\emph{DBLP} &
$703\,193$ & $2\,341\,362$ &
23s & $1\,797$ &
${0.87} / {0.99}$ & ${0.98} / {1.00}$ & ${0.45} / {0.99}$
\\
\bottomrule
\end{tabular*}
\end{table}

\textbf{Time complexity.} 
Our first step is to study the running time of our algorithm.  
We ran our experiments on a laptop equipped with a
1.8 GHz dual-core Intel Core i7 with 4 MB shared L3 cache,
and typical running times for
each dataset are given in 3rd column of
Table~\ref{tab:stats}.\footnote{For the code, see   \url{http://users.ics.aalto.fi/~ntatti/}}   
Our algorithm is fast: 
for the largest dataset with 2 million edges, the computation took only 20 seconds.
The algorithm consists of 4 steps, computing PageRank, ordering the vertices,
grouping the vertices into blocks such that monotonicity condition is
guaranteed, and segmenting the groups. The only computationally strenuous step
is segmentation which requires quadratic time in the number of blocks.
The number of vertices in \emph{DBLP} is over $700\,000$,
however, grouping according to the PAV algorithm leaves only $2\,000$
blocks, 
which can be easily segmented. 
It is possible to select weights in such a way that there
will no reduction when grouping vertices, so that finding the optimal
segmentation becomes infeasible. However, in such a case, we can always resort
to a near-linear approximation optimization algorithm~\cite{DBLP:journals/tods/GuhaKS06}.

\textbf{Comparison to baseline.} A key part in our approach is discovering a
good order.  Our next step is to compare the order induced by \sortnodes
against several natural baselines. For the first baseline we group the vertices based
on the length of a minimal path from the source. We then compared 
these communities, say $\fm{H}$, to the (same number of) communities obtained with our method.  The scores, given in Table~\ref{tab:stats}, show that
our approach beats this baseline in every case, which is expected since this
na\"{i}ve baseline does not take into account density. For our next two baselines we order
vertices based on vertex degree and PageRank. We then compute community sequences
with $2$--$10$ communities from these orders.  Typical scores are given in
Figure~\ref{fig:baseline}.  Out of $6 \times 3 \times 9 = 162$ comparisons,
\sortnodes wins both orders 158 times, ties once 
(\emph{Karate}, \wghtmin{}, 3 communities) and loses 3 times to the degree order
(\emph{DBLP}, \wghtnorm{}, 3--5 communities).

\begin{figure}[t]
\begin{center}
\begin{tikzpicture}
\begin{axis}[xlabel={number of communities}, ylabel= {$\score{\fm{V}} / \score{\fm{B}}$}, title = {weight \wghtnorm{}},
    width = 4.3cm,
    cycle list name=yaf,
	legend to name=leg:score,
	legend columns = 3,
	legend entries = {\sortnodes, \textsc{Degree}, \textsc{PageRank}}
    ]
\addplot table[x index = 0, y index = 37, header = false] {scores.dat};
\addplot table[x index = 0, y index = 38, header = false] {scores.dat};
\addplot table[x index = 0, y index = 39, header = false] {scores.dat};
\pgfplotsextra{\yafdrawaxis{2}{10}{0.86}{0.93}}
\end{axis}
\end{tikzpicture}%
\begin{tikzpicture}
\begin{axis}[xlabel={number of communities},title = {weight \wghtsum{}},
    width = 4.3cm,
    cycle list name=yaf,
    ]
\addplot table[x index = 0, y index = 40, header = false] {scores.dat};
\addplot table[x index = 0, y index = 41, header = false] {scores.dat};
\addplot table[x index = 0, y index = 42, header = false] {scores.dat};
\pgfplotsextra{\yafdrawaxis{2}{10}{0.94}{0.973}}
\end{axis}
\end{tikzpicture}%
\begin{tikzpicture}
\begin{axis}[xlabel={number of communities},title = {weight \wghtmin{}},
    width = 4.3cm,
    cycle list name=yaf,
    ]
\addplot table[x index = 0, y index = 43, header = false] {scores.dat};
\addplot table[x index = 0, y index = 44, header = false] {scores.dat};
\addplot table[x index = 0, y index = 45, header = false] {scores.dat};
\pgfplotsextra{\yafdrawaxis{2}{10}{0.54}{0.75}}
\end{axis}
\end{tikzpicture}

\ref{leg:score}
\end{center}

\caption{Quality scores of community sequences based on different orders as a function of number of communities for \emph{Polblogs}. The scores are normalized
by the score of a community sequence $\fm{B}$ with a single community.}
\label{fig:baseline}
\end{figure}
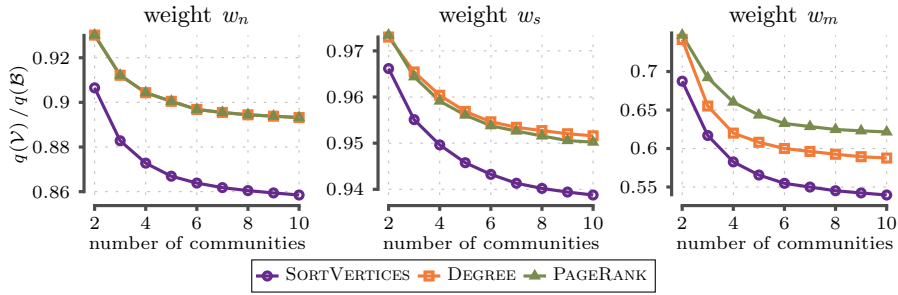

\textbf{Examples of Communities.}
Our final step is to provide examples of discovered communities.  In
Figure~\ref{fig:karate} we provide 4 different community sequences with 3 communities
using weights \wghtsum{} and \wghtnorm{} and sources $S = \set{1}$ and $S = \set{33, 34}$.
The inner-most community for $1$ contains a near 5-clique.
The inner-most community for $33, 34$ contains two 4-cliques.
The normalized weight \wghtnorm{} penalizes hubs. This can be seen in Figure~\ref{fig:karate:a},
where hubs $33$, $34$ move from the outer community to the middle community.
Similarly, hub $1$ changes communities in Figure~\ref{fig:karate:b}.
Finally, we give an example of communities discovered in \emph{DBLP}.
Table~\ref{tab:dblpex} contains communities discovered around Christos Papadimitriou.
Authors in inner communities share many joint papers with Papadimitriou.

\begin{figure}[t]
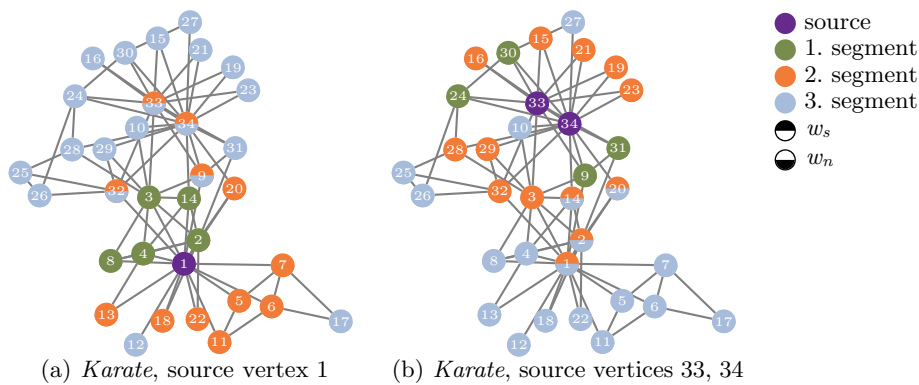

\subfigure[\emph{Karate}, source vertex 1\label{fig:karate:a}]{\input{karate_1}}\hfill
\subfigure[\emph{Karate}, source vertices 33, 34\label{fig:karate:b}]{\input{karate_3334}}\hfill
\begin{tikzpicture}[baseline = (current bounding box.north)]
\node[circle, line width = 0pt, text = white, inner sep = 0.5pt, text width = 7pt, fill = yafcolor1, label = {right:source}] (n1) {};
\node[circle, line width = 0pt, text = white, inner sep = 0.5pt, text width = 7pt, fill = yafcolor3, label = {right:1. segment}, below = 2pt of n1] (n2) {};
\node[circle, line width = 0pt, text = white, inner sep = 0.5pt, text width = 7pt, fill = yafcolor2, label = {right:2. segment}, below = 2pt of n2] (n3) {};
\node[circle, line width = 0pt, text = white, inner sep = 0.5pt, text width = 7pt, fill = yafcolor5!40, label = {right:3. segment}, below = 2pt of n3] (n4) {};
\node[circle, fill, line width = 1pt, text = white, inner sep = 0.5pt, text width = 7pt, circle split part fill = {black, white}, label = {right:\wghtsum{}}, below = 2pt of n4] (n5) {};
\node[circle, fill, line width = 1pt, text = white, inner sep = 0.5pt, text width = 7pt, circle split part fill = {white, black}, label = {right:\wghtnorm{}}, below = 2pt of n5] (n6) {};
\end{tikzpicture}
\caption{4 community sequences with 3 communities of \emph{Karate}.  Segmentations in
Figure~\ref{fig:karate:a} use 1 as a source and community sequences in
Figure~\ref{fig:karate:b} use 33, 34 as sources. Communities are
decoded as colors, the top-half represents \wghtsum{}, the bottom-half represents \wghtnorm{}.}
\label{fig:karate}
\end{figure}

\begin{table}[t]
\label{tab:dblpex}
\caption{Top-3 communities from a sequence of 5 communities for Christos Papadimitriou from \emph{DBLP} set and using \wghtsum{}.}
\scriptsize
\begin{tabular*}{\linewidth}{@{\extracolsep{\fill}}*{6}{l}}
\toprule
\textbf{1. segment} & D. Johnson & E. Dahlhaus & V. Vianu & G. Gottlob & A. Itai \\
M. Yannakakis & M. Garey & P. Crescenzi & P. Kanellakis & M. Sideri & A. Sch\"{a}ffer \\
F. Afrati & R. Karp & P. Seymour & S. Abiteboul & E. Koutsoupias & A. Aho \\
\textbf{2. segment} & R. Fagin & O. Vornberger & A. Piccolboni & C. Daskalakis & P. Serafini \\
J. Ullman & \textbf{3. segment} & M. Blum & D. Goldman & X. Deng & P. Raghavan \\
Y. Sagiv & G. Papageorgiou & K. Ross & E. Arkin & P. Goldberg & P. Bernstein \\
S. Cosmadakis & V. Vazirani & P. Kolaitis & I. Diakonikolas & T. Hadzilacos \\
\bottomrule
\end{tabular*}
\end{table}

\section{Concluding Remarks}\label{sec:concl}
We considered a problem of discovering nested communities, a sequence of subgraphs
such that each community is a more connected subgraph of the next community.
We approach the problem by dividing it into two subproblems: discovering
the community sequence for a fixed order of vertices, a problem which we can solve
efficiently, and discovering an order. We provided a simple heuristic for
discovering an order, and provided theoretical and empirical evidence that
this order is good.

Discovering nested communities seems to have a lot of potential as it is
possible to modify or extend the problem in many ways. We can generalize the
problem by not only considering sequences but, for example, trees of communities,
where a parent node needs to be a denser subgraph than the child
node. 
Another possible extension is to consider multiple source sets instead
of just one.

\subsubsection*{Acknowledgements.} This work was supported by Academy of Finland grant 118653 ({\sc algodan})

\bibliographystyle{plain-initials}
\bibliography{bibliography,communities}

\end{document}